\documentclass[11pt]{article}

\usepackage{fullpage}
\usepackage{latexsym}
\usepackage{amssymb}
\usepackage{amsfonts}

\def\01{\{0,1\}}

\newcommand{\Oh}[1]{\mathrm{O}\!\left(#1\right)}

\newcommand{\oo}[1]{\mathrm{o}(#1)}
\newcommand{\Om}[1]{\Omega\!\left(#1\right)}
\newcommand{\Th}[1]{\Theta\!\left(#1\right)}

\newcommand{\OR}{\mbox{\rm OR}}
\newcommand{\Search}{\mbox{\rm SEARCH}}

\newcommand{\DISJ}{\mbox{\rm DISJ}}
\newcommand{\NDISJ}{\mbox{\rm NDISJ}}


\newcommand{\vectorres}[2]{{#1^{(#2)}}}

\newtheorem{definition}{Definition}
\newtheorem{theorem}{Theorem}
\newtheorem{fact}[theorem]{Fact}
\newtheorem{lemma}[theorem]{Lemma}

\newenvironment{proof}[1][Proof.]{
    \par
    \noindent \textbf{#1}
}{
    \nobreak\leavevmode
    \hfill $\Box$\par\bigskip
}

\begin{document}

\title{\bf A Strong Direct Product Theorem for Disjointness}
\author{Hartmut Klauck\\
Centre for Quantum Technologies and\\
     SPMS, Nanyang Technological University\\
       Singapore\\
hklauck@gmail.com
}
\date{}
\maketitle

\begin{abstract}
A strong direct product theorem states that if we want to compute
$k$ independent instances of a function, using less than $k$ times
the resources needed for one instance, then the overall success
probability will be exponentially small in $k$.
We establish such a theorem for the randomized communication complexity
of the Disjointness problem, i.e., with communication $const\cdot kn$ the success probability
of solving $k$ instances of size $n$ can only be exponentially small in $k$. This solves an open problem of
\cite{klauck:qsdpt,lee:dptdisc}. We also show that this bound even holds for $AM$-communication protocols with limited ambiguity.

The main result implies a new lower bound for Disjointness in a restricted 3-player NOF protocol,
and optimal communication-space tradeoffs for Boolean matrix product.

Our main result follows from a solution to the dual of a linear programming problem,
whose feasibility comes from a so-called Intersection Sampling Lemma that generalizes
a result by Razborov \cite{razborov:disj}.
\end{abstract}

\section{Introduction}

\subsection{Direct product theorems}

One of the fundamental questions that can be asked in any model of computation is how well computing several instances of the
same problem can be composed. Are significant savings possible when computing the same function $f$ on $k$ independent inputs?
Or is it true that the optimal way to do this is to run the same algorithm independently $k$ times?

This question can be asked for any measure of complexity and is usually referred to as the {\em direct sum problem}.
In this paper we consider the model of randomized communication complexity.
A protocol between players Alice and Bob is given $k$ inputs $(x_1,y_1),\ldots,(x_k,y_k)$, and has
to output the vector of $k$ answers $f(x_1,y_1),\ldots,f(x_k,y_k)$.
The question is how the protocol can optimally distribute its
resources among the $k$ input instances it needs to compute.
In this setting we are not only concerned with the overall communication, but also with the achievable success probability.
If the trivial way of computing $f$ for $k$ input instances by running the same protocol $k$ times independently is really optimal,
then we should expect the success probability $\sigma$ to go down exponentially with $k$.

Such statements can come in two flavors. First, if every protocol with communication $c$ has constant error probability when
computing just \emph{one} instance of $f$, then for computing $k$ instances with communication $c$ we expect $\sigma$ to be exponentially small in $k$. If this is the case for a function $f$, we say that a \emph{weak} direct product theorem holds for $f$. If this is the case for {\em all} functions, we say that a weak direct product theorem holds in general.

However, even if we allow our protocol to use communication $kc$ we might expect $\sigma$ to be exponentially small in $k$,
unless the protocol could somehow correlate its computation on several instances for all possible choices of inputs.
If such a statement is true we call it a \emph{strong} direct product theorem (SDPT).

Strong direct product theorems are usually hard to prove and sometimes not even true.
In particular Shaltiel \cite{shaltiel:sdpt} exhibits a general setup
in which strong direct product theorems cannot be expected, and in fact even direct sum theorems (in which we do not care about the success probability but only about the scaling of the complexity with $k$) do not hold.
His main argument is that in the distributional complexity setting one can construct functions $f$ for which there is a ``hard core" of some size $\epsilon$ that cannot be ignored when allowing only error probability $\epsilon/3$ (making computing one instance hard), yet given $k$ instances
only roughly $\epsilon k$ of them will be in the hard core, and we can re-allocate most of our resources to those. By construction $f$ is trivial outside the hard core and we can easily solve the other instances. Altogether this approach uses roughly $\epsilon k$ times the resources needed for one instance while having very small overall error. The main conclusion of this example should be that when proving direct sum type statements in the distributional setting one should expect to lose a factor of $\epsilon$ in the complexity bound.

An incomplete list of examples of ``positive" results about DPT's are Nisan et al.'s~\cite{nrs:products}
strong direct product theorem for ``decision forests'', Parnafes
et al.'s~\cite{prw:productgcd} direct product theorem for
``forests'' of communication protocols, Shaltiel's strong direct
product theorems for ``fair'' decision trees and for the discrepancy
bound for communication complexity under the uniform distribution  \cite{shaltiel:sdpt}, Lee et al.'s analogous result for arbitrary
distributions \cite{lee:dptdisc}, Viola and Wigderson's extension to the multiparty case \cite{viola:xor}, Ambainis et al.'s SDPT for the quantum query complexity of symmetric functions \cite{ambainis:qdpt}, Jain et al.'s SDPT for subdistribution bounds in communication complexity \cite{jain:subdis}, Ben-Aroya et al.'s SDPT for the quantum one-way communication complexity of the Index function \cite{ben-aroya:hyper}, Impagliazzo et al.'s DPT for uniform circuits \cite{impagliazzo:dpt} and several more. In a similar vein are ``XOR"-lemmas like Yao's \cite{yao:xor}. ``Direct Sum" results which state that $k$ times the resources are needed without the success probability deterioration are also important in communication complexity, see \cite{kushilevitz&nisan:cc,barak:dsum}.

In this paper we focus on the Disjointness problem in communication complexity.
Suppose Alice has an $n$-bit input $x$ and Bob has an $n$-bit input $y$.
These $x$ and $y$ represent sets, and $\DISJ_n(x,y)=1$ iff those sets are disjoint.
Note that $\DISJ_n$ is the negation of $\NDISJ_n=\OR_n(x\wedge y)$, where
$x\wedge y$ is the $n$-bit string obtained by bitwise AND-ing $x$ and $y$.
In many ways, $\NDISJ_n$ plays a central role in communication
complexity. In particular, it is ``NP complete''~\cite{bfs:classes} in the communication complexity world.
The communication complexity of $\NDISJ_n$ has been well studied:
e.g.~it takes $\Th n$ bits of communication classically~\cite{ks:disj,razborov:disj}
and $\Th{\sqrt{n}}$ quantumly \cite{aaronson&ambainis:search,razborov:qdisj}.

For the case where Alice and Bob want to compute $k$ instances
of Disjointness, we establish a strong direct product theorem in Section~\ref{sec:dpt}:
\begin{quote}
{\bf SDPT for randomized communication complexity:}\\
Every randomized protocol that computes $\vectorres {\NDISJ_n} k$ using
$T\leq \beta k n$ bits of communication
has worst-case success probability $\sigma=2^{-\Om{k}}$.
\end{quote}
Note that the same result holds for $\DISJ_n$ by symmetry.
Previously, Klauck et al.~\cite{klauck:qsdpt} proved that the same success probability bound holds
when the communication is $\beta k\sqrt n$ (but even in the quantum case). The same bound was
obtained by Beame et al.~\cite{beame:sdpt} for randomized communication, and they give an SDPT for the {\em rectangle bound} under product distributions (under such distributions $\DISJ_n$ has complexity $\sqrt n$). The rectangle bound appears in the literature also under the name {\em corruption bound} \cite{yao:prob, klauck:thresh, beame:sdpt}. Klauck \cite{klauck:qcst} also showed
a weak DPT for the rectangle bound under {\it all} distributions, which implies that with communication $\beta n$
the success probability goes down exponentially in $k$.

Our approach is as follows. First we massage the problem in a very similar manner as in \cite{klauck:qsdpt}. This leads
to the problem of finding $k$ elements in the intersection of two $N$ bit strings. Since these can
easily be verified, we can assume that the protocol either gives up or produces correct outputs. We are interested in the tradeoff between success probability and communication.

The next step is to formulate a linear program that corresponds to a relaxation of an integer program expressing a convex combination of partitions of the communication matrix with the desired acceptance probabilities. Similar programs have been considered
before by Lov\'{a}sz \cite{lovasz:cc} and by Karchmer et al.~\cite{karchmer:fractional}, but have rarely been used to bound randomized communication complexity. The program expresses that we can
detect inputs $x,y$ with intersection size $k$ with ``high" probability, while not accepting inputs with smaller intersection size at all,
and, trivially but importantly, accepting the remaining inputs with probability at most 1. This extra constraint expresses the fact that we do not talk about covers of the communication matrix, but partitions. Unsurprisingly we prove the lower bound by exhibiting a solution to the dual. This approach is intimately related to the {\em smooth rectangle bound} explored in \cite{jain:partition}, see Section 1.3.

To prove feasibility of the dual solution we provide what we call the intersection sampling lemma.
This lemma is a generalization of Razborov's main lemma from \cite{razborov:disj} and follows  from it by a rather simple induction argument. The intersection sampling lemma states that (for suitable distributions) any rectangle that is large among the disjoint $x,y$ is also large for inputs that have intersection size $k$. This is true for every $k$, losing a $2^k$ factor. Razborov's Lemma is essentially the same statement for $k=1$.

\subsection{Applications}

\subsubsection{Communication-Space Tradeoffs}
Our main result has some applications to other problems.
First, we consider communication-space tradeoffs.
Research on communication-space tradeoffs
has been initiated by Lam et al.~\cite{lam:commtrade} in a
restricted setting, and by Beame et al.~\cite{beame:commtrade} in
a general model of space-bounded communication complexity. In the
setting of communication-space tradeoffs, players Alice and Bob are
space bounded circuits, and we are interested in the
communication cost when given particular space bounds.

We study the problems of \emph{Boolean matrix-vector product} and \emph{Boolean
matrix product}. In the first problem there are an $N\times N$
matrix $A$ (input to Alice) and a vector $b$ of dimension $N$ (input to Bob), and the goal is to
compute the vector $c=Ab$, where $c_i=\vee_{j=1}^n \left(A[i,j]\wedge
b_j\right)$. In the problem of matrix
multiplication two input matrices have to be multiplied with the analogous
Boolean product.

{\em Time}-space tradeoffs for Boolean matrix-vector multiplication have
been analyzed in an average case scenario by Abrahamson
\cite{abrahamson:booleantrade}, whose results give a worst case
lower bound of $TS=\Om{N^{3/2}}$ for classical algorithms. He
conjectured that a worst case lower bound of $TS=\Om{N^2}$
holds, which was later confirmed in \cite{klauck:qsdpt}.

Beame et al.~\cite{beame:commtrade} gave tight lower bounds for communi\-cation-space tradeoffs for the matrix-vector product and
matrix product over finite fields, but stated the complexity of Boolean
matrix-vector multiplication as an open problem. Klauck \cite{klauck:qcst} generalized the results for finite fields
to the quantum case, but also
showed the following lower bounds for the Boolean product and randomized protocols: for matrix-vector product $CS^2=\Om{N^2}$, and for matrix-matrix product $CS^2=\Om{N^3}$. Using our direct
product result we are now able to
show that any randomized protocol for matrix-vector product satisfies $CS=\Om{N^2}$, and for matrix-matrix product $CS=\Om{N^3}$.
These bounds match the trivial upper bounds.

\subsubsection{Multiparty Communication}

Consider the Nondisjointness problem in the 3 player num\-ber-on-the-forehead
setting, i.e., Alice sees inputs $y,z$ Bob sees $x,z$ and Charlie sees $x,y$. They have to decide whether there is an index $i$ such that $x_i=y_i=z_i=1$. Lee and Shraibman \cite{lee:disj} as well as Chattopadhyay and Ada \cite{chatto:multiparty} show that the randomized complexity of this problem is $\Omega(n^{1/4})$. Prior to these results larger bounds were shown for models in which the interaction between the players is restricted. In particular, in the model with one-way communication, Viola and Wigderson show a $\Omega(\sqrt n)$ lower bound \cite{viola:pointer}, and in the model, where Charlie sends a single message, followed by an arbitrary protocol between Alice and Bob, Beame et al.~\cite{beame:sdpt} show an $\Omega(n^{1/3})$ lower bound, which was later simplified by
 Ben-Aroya et al.~\cite{ben-aroya:hyper}. Using our main theorem we can show that the latter type of protocol actually needs communication $\Omega(\sqrt n)$.

\subsection{The Smooth Rectangle Bound}

Our main result is proved by giving a solution to the dual of a linear program. While this program is tailor made for the problem at hand, this is a general approach described e.g.~in \cite{lovasz:cc, karchmer:fractional}.

In \cite{lovasz:cc} Lov\'{a}sz in describes such a LP-based lower bound method for randomized protocols, which can be seen to be equivalent to the rectangle bound (for a proof see \cite{jain:partition}).
Adding a seemingly trivial constraint to the LP described by Lov\'{a}sz in gives a more powerful lower bound method (via the dual), by using the fact that protocols {\em partition} the inputs into rectangles instead of covering them. The lower bound method is similar to the rectangle bound, but allows the use of negative weights for a small fraction of the 1-inputs.
We refer to this enhanced LP-based lower bound method as the {\em smooth rectangle bound}, because it can be seen as a maximum of the rectangle bounds achievable by functions that are close to the function $f$ we are interested in. The smooth rectangle bound is defined and explored in \cite{jain:partition}. The linear program that we use to establish our main result also uses the partition property crucially, and in fact, Lemma 1 could not be established using the rectangle bound.

 The smooth rectangle bound relates to the rectangle bound similar to the way the generalized discrepancy method (introduced by \cite{sherstov:pattern,klauck:lbqcc} and named so in \cite{chatto:multiparty}) relates to the standard discrepancy bound (both methods are lower bounds on quantum communication, and the generalized discrepancy is in fact equivalent to Linial and Shraibman's (approximate) $\gamma_2$-measure \cite{linial:norms}).

We can pinpoint the power of the smooth rectangle bound (for Boolean functions) more closely by observing that it actually lower bounds {\em unambiguous AM-protocols}. Indeed our main result also holds for AM-protocols with ambiguity $2^{\epsilon k}$ as we state in Theorem 11.  Note that $\NDISJ_n$ has very efficient nondeterministic protocols (and so does its $k$-fold), so the lower bound really comes from the partition constraints. In particular we also show that any unambiguous AM-protocol for $\NDISJ_n$ needs linear communication (while nondeterministic protocols need communication $O(\log n))$. Note that proving lower bounds for unrestricted AM-protocols is an open problem.

\section{Preliminaries}

In this section we give some definitions of some of the models of communication we study. We refer to \cite{kushilevitz&nisan:cc} for more background in communication complexity.

\subsection{Some Definitions on Communication Complexity}
The protocols we consider are in the standard two-player model \cite{yao:cc, kushilevitz&nisan:cc} unless stated otherwise.
The randomized protocols we consider are public coin protocols. Success probability of a protocol is defined to be the probability over the coins to compute the correct output for a worst case input. Note that we require both players to agree on a common output.

A nondeterministic protocol for a Boolean function $f$ is a cover of the 1-inputs in the communication
matrix of $f$ with 1-chromatic rectangles, its cost is the logarithm of the number of rectangles used. Alternatively, a nondeterministic protocol can be viewed as a proof system, in which a prover sends a proof to Alice, after which Alice and Bob verify the proof. In a valid protocol for all 1-inputs there exists a proof that is accepted, and for all 0-inputs all proofs are rejected. The cost is the amount of communication between Alice and Bob. A nondeterministic protocol with ambiguity $t$ is a nondeterministic protocol in which each 1-input has no more than $t$ different proofs. For $t=1$ such protocols are called unambiguous.

Karchmer et al.~\cite{karchmer:witness} have shown that nondeterministic protocols with ambiguity $t$ have complexity at least $\Omega(\sqrt{D(f)}/t)$. Also the rank lower bound holds for unambiguous protocols.

In a computation of a $k$-tuple of Boolean functions by a nondeterministic protocol, the prover wants to convince Alice and Bob of the fact that $f(x_i,y_i)=1$ for as many $i$ as possible. Such a protocol is correct, if for all $x_1,y_1,\ldots, x_k,y_k$ such that $f(x_i,x_i)=1$ for all $i\in I\subseteq\{1,\ldots,k\}$ there is a proof such that Alice and Bob agree on output $o_1,\ldots, o_k$ with $o_i=1\iff i\in I$, while for no $i\not\in I$ there exists a proof such that $o_i=1$ will be an output. Note that in this definition we never require the prover to convince Alice and Bob of the fact that $f(x_i,y_i)=0$ for any position $i$, so this is genuine one-sided nondeterminism for many-output problems.

In other words every nondeterministic protocol with ambiguity $t$ is a collection of at most $2^c$ rectangles each labeled by an output sequence such that for each input $x_1,y_1,\ldots, x_k,y_k$ and each rectangle $R$ with output $o_1,\ldots,o_k$ containing that input: $o_i\leq f(x_i,y_i)$ for all $i$, and there exists a rectangle containing the input where $o_i=f(x_i,y_i)$ for all $i$. Furthermore each input is contained in at most $t$ such rectangles. The communication cost is then $c$.

An Arthur-Merlin communication protocol (first suggested in \cite{bfs:classes}) with ambiguity $t$ and communication $c$ is a convex combination of a set of nondeterministic protocols $P_i$, each occurring with probability $p_i$. Each nondeterministic protocol is a collection of at most $2^c$ rectangles each labeled by an output sequence and each input is contained in at most $t$ such rectangles per $P_i$.
We require that for each input $x_1,y_1,\ldots, x_k,y_k$ with probability at least $1-\epsilon$ the protocol $P_i$ has  $x_1,y_1,\ldots, x_k,y_k$ in some rectangle labeled $f(x_1,y_1),\ldots,f(x_k,y_k)$, whereas with probability at most $\epsilon$ a $P_i$ contains the input in a rectangle labeled with $o_j=1$ while $f(x_j,y_j)=0$ for some $i$. An AM-protocol with ambiguity 1 is called unambiguous (note that for different values of the public coin different proofs are allowed for the same input).

\subsection{Communicating Circuits}

In the standard model of communication complexity Alice and Bob are computationally unbounded
entities, but we are also interested in what happens if they have bounded memory, i.e.,
they work with a bounded amount of storage. To this end we model Alice and Bob as
communicating circuits. In short, these circuits place no restrictions on local gates, but require the number of bits stored locally to be bounded. Communication is the number of wires crossing between Alice and Bob's part of the circuit.

A pair of communicating circuits is actually a single
circuit partitioned into two parts. The allowed operations
are local computations and access to the inputs. Alice's part of the circuit may
read single bits from her input, and Bob's part of the circuit
may do so for his input. Otherwise arbitrary gates (of any fan-in) on the
locally available bits can be used.

The communication $C$ between the two
parties is simply the number of wires carrying bits that cross
between the two parts of the circuit. A pair of communicating
circuits uses space $S$, if the whole circuit works on $S$
bits storage.
In the problems we consider, the number of outputs is much larger
than the memory of the players. Therefore we use the following
output convention. The player who computes the value of an output
sends this value to the other player at a predetermined point in
the protocol, who is then allowed to "forget" the output.
Outputs have to be made in some specified order in the circuit, i.e., we expect the $i$th output
to be made at a specific gate.

\section{The Direct Product Theorem}\label{sec:dpt}

In this section we formally state and prove our main result.
\subsection{Massaging the Problem}

In this section we bring the $k$-fold $\NDISJ_n$ problem into anther form that will be easier to handle.
More precisely, we will consider the following three problems. We freely identify strings $x\in\{0,1\}^n$ with
the sets they are characteristic vectors of.

\begin{definition}
\begin{enumerate}
\item $\vectorres{\NDISJ_n}k$ is the problem, given $k$ pairs of strings $x_i,y_i$ of length $n$ each, to compute
 the $k$-tuple of function values of $\NDISJ_n$ on these.
\item $\vectorres{\Search_n}k$ is the problem, given $k$ pairs of strings $x_i,y_i$ of length $n$ each, to find
  indices $j_1,\ldots,j_k$, such that $x_i$ and $y_i$ intersect in $j_i$. If $x_i$ and $y_i$ are disjoint, output 0 for position $i$.
\item $\Search_{{N\choose k}}$ is the problem, given two strings $x,y$ of length $N$, to find $k$ indices $j_1,\ldots,j_k$, such that $x$ and $y$ intersect in all $j_i$. If $|x\cap y|<k$ output 0.
\end{enumerate}
\end{definition}

We will prove that problem 3) is hard in the following subsections and state the result now.

\begin{lemma} [Main]\label{lem:main}
There are constants $0<\alpha, \beta,\gamma\leq 1$ such that every randomized protocol with communication $\beta N$ for the problem $\Search_{{N\choose k}}$ with $k\leq \gamma N$ has success probability at most $2^{-\alpha k}.$
\end{lemma}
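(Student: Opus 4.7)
The plan is to follow the LP-duality approach sketched in the introduction. First, by Yao's minimax principle, it suffices to exhibit an input distribution $\mu$ under which every deterministic protocol with at most $\beta N$ bits of communication has success probability at most $2^{-\alpha k}$. I would take $\mu = \frac{1}{2}\mu_0 + \frac{1}{2}\mu_k$, where $\mu_0$ is uniform on disjoint pairs $(x,y)$ with $|x|=|y|=N/4$ and $\mu_k$ is uniform on pairs with $|x|=|y|=N/4$ and $|x\cap y|=k$ (Razborov-style distributions). Since a proposed $k$-tuple output can be verified by Alice and Bob with only $k$ additional bits of communication, one may assume without loss of generality that the protocol is of the ``correct-or-give-up'' form; every non-give-up rectangle $R$ of the induced partition is then labeled by some $\bar j\in\binom{[N]}{k}$ with $\bar j$ in the common intersection of every pair in $R$, and it contributes $\mu_k(R)$ to the success probability under $\mu_k$.

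Next, I would cast the existence of a randomized protocol with communication $c$ and success $\sigma$ as feasibility of a linear program whose variables are nonnegative weights $w_R$ indexed by valid labeled rectangles, with a success constraint $\sum_R \mu_k(R)\, w_R \geq \sigma$, a partition constraint $\sum_{R \ni (x,y)} w_R \leq 1$ for every input $(x,y)$, and a communication constraint $\sum_R w_R \leq 2^c$. The partition constraint is what distinguishes this LP from a pure covering LP (such as the nondeterministic or Lov\'asz LP) and is the source of the ``smooth'' character of the resulting bound; without it, the argument could be saturated by a nondeterministic protocol, and $\NDISJ_n$ is known to admit a very efficient nondeterministic protocol.

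The upper bound on $\sigma$ would then follow by producing a feasible dual solution. The dual has nonnegative multipliers $\beta(x,y)$ on the partition constraints and a scalar $\gamma \ge 0$ on the communication constraint, with objective $\sum_{(x,y)} \beta(x,y) + 2^c \gamma$ and rectangle-wise constraints $\mu_k(R) \le \sum_{(x,y) \in R} \beta(x,y) + \gamma$ for every valid labeled rectangle $R$. I would set $\beta$ to be proportional to $\mu_0$, augmented by a small smoothing correction, so that the dual constraint reduces in the essential cases to a rectangle-wise comparison of $\mu_k(R)$ with $2^{O(k)} \mu_0(R)$ up to an additive error $2^{-\Omega(N)}$. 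This is the content of the \emph{Intersection Sampling Lemma}, a generalization of Razborov's main rectangle-comparison lemma from \cite{razborov:disj}, proved by induction on $k$: the base case is essentially Razborov's lemma, and the inductive step conditions on one intersection coordinate to reduce an instance with parameters $(N,k)$ to one with $(N-1,k-1)$, at the cost of a constant factor per step.

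The main obstacle, I expect, is twofold. First, articulating the Intersection Sampling Lemma and the dual weighting so that the constraint is certified on every valid labeled rectangle, including those with $\mu_0(R)=0$ (for which the raw Razborov comparison is vacuous); the partition constraint is indispensable here because it permits effectively signed combinations in the dual, in the spirit of the smooth rectangle bound of \cite{jain:partition}. Second, balancing the parameters -- the $2^{O(k)}$ loss from the $k$-fold induction, the $2^{c}=2^{\beta N}$ factor from the communication budget, and the $2^{-\Omega(N)}$ additive error from the ISL -- so that the dual objective comes out to at most $2^{-\alpha k}$ uniformly for $k\le \gamma N$; these inequalities pin down the admissible constants $\alpha,\beta,\gamma$ in the lemma.
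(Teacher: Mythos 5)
Your high-level plan matches the paper's strategy — an LP with a partition constraint (the smooth rectangle bound), restriction to labeled rectangles whose inputs share a certified $k$-set, a dual witness, and an Intersection Sampling Lemma proved by induction on $k$ generalizing Razborov. But the specific dual weighting you propose is broken, and the defect is exactly at the one place where the construction has to earn its keep. You set the dual multipliers $\beta(x,y)$ proportional to $\mu_0$, i.e., you put mass on disjoint pairs, and propose to verify the rectangle constraints by comparing $\mu_k(R)$ against $\mu_0(R)$. Yet for every valid labeled rectangle $R\in{\cal R}_v$, all $(x,y)\in R$ share a $k$-set in $x\cap y$, so $R$ contains no disjoint pair and $\mu_0(R)=0$ identically; consequently $\sum_{(x,y)\in R}\beta(x,y)=0$ and the constraint $\mu_k(R)\leq\sum_{(x,y)\in R}\beta(x,y)+\gamma$ forces $\mu_k(R)\leq\gamma$, which fails for any large rectangle once $\gamma$ is small enough to pay for the $2^c\gamma$ term. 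You flag this as ``the main obstacle'' but offer only a vague ``smoothing correction'' without saying what it is; that correction \emph{is} the theorem.

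The paper's resolution is a shift of the relevant distribution by $k$. The cancelling dual mass is placed on inputs of intersection size $2k$ (distribution $\mu_{2k,n+k,m+k}$), scaled by roughly $2^{-\alpha k}$, and the rectangle constraint becomes a comparison of $\mu_k(R)$ against $\mu_{2k}(R)$, both of which are genuinely positive on rectangles in ${\cal R}_v$. The $\mu_0$-versus-$\mu_k$ comparison of the Intersection Sampling Lemma is not applied to $R$ directly but to the reduced rectangle $R'\subseteq\{0,1\}^n\times\{0,1\}^n$ obtained by deleting the certified $k$ coordinates; via the change-of-measure identities (Lemma 4 in the paper), $\mu_{0,n,m}(R')$ corresponds to $\mu_{k,n+k,m+k}(R)$ and $\mu_{k,n,m}(R')$ corresponds to $\mu_{2k,n+k,m+k}(R)$, and one gets $\mu_{2k}(R)\gtrsim(2^k/\sqrt k)\,\mu_k(R)$ for every large $R\in{\cal R}_v$. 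That factor $2^{\Omega(k)}$ is exactly what lets the exponentially small prefactor on the $\mu_{2k}$ weights still dominate the $\mu_k$ mass and drive the rectangle constraint down to zero, while keeping the dual objective around $2^{\beta n-\alpha k}$. Your sketch never makes this coordinate-deletion / distribution-shift step, so as written the dual is infeasible for large rectangles and the argument does not close. (A minor additional deviation: the paper's LP encodes per-input acceptance probabilities and does not route through Yao's minimax with a fixed hard distribution $\tfrac12\mu_0+\tfrac12\mu_k$; the $\mu_0$ part of your proposed hard distribution plays no role in the subsequent LP and is a red herring.)
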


We now establish that the first two problems are also at least as hard as 3) by reductions very similar to the analogous reductions in \cite{klauck:qsdpt}.

\begin{theorem}[SDPT for Search]\label{th:ind-search}
There are constants $0<\alpha', \beta\leq 1$ such that every randomized protocol with communication $\beta kn$ for the problem $\vectorres{\Search_n}k$ has success probability at most $2^{-\alpha' k}$.
\end{theorem}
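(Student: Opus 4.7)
The plan is to reduce the hard problem $\Search_{{N\choose k}}$ from Lemma~\ref{lem:main} to $\vectorres{\Search_n}k$ via a random permutation, with $N=kn$. Given any randomized protocol $Q$ for $\vectorres{\Search_n}k$ with communication at most $\beta kn$ and worst-case success probability $\sigma$, I would construct a protocol $P$ for $\Search_{{N\choose k}}$ as follows: using public coins, Alice and Bob draw a uniformly random permutation $\pi$ of $[N]$, apply it to both input strings, partition the permuted strings into $k$ consecutive blocks of length $n$, and feed the $k$ resulting pairs to $Q$. They translate $Q$'s outputs $(j_1,\ldots,j_k)$ back through $\pi^{-1}$, spend $\OO{k}$ additional bits verifying that the resulting indices are genuine intersections of $x,y$, and output them (or $0$ if any check fails). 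The total communication is $\beta kn+\OO{k}\leq \beta'N$ for a slightly larger constant $\beta'$.

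Fix any input $(x,y)$ to $P$ with $|x\cap y|\geq k$, and let $p(x,y)$ denote the probability over $\pi$ that every one of the $k$ blocks receives at least one intersection coordinate. On such a good permutation, the definition of $\vectorres{\Search_n}k$ forces $Q$, when it succeeds, to return a nonzero correct index for each block, so the verification passes and $P$ outputs $k$ genuine intersections (automatically distinct, since they come from different blocks). Since $Q$'s coins are independent of $\pi$ and $Q$ succeeds with probability at least $\sigma$ on every fixed input, we obtain $\Pr{P\text{ succeeds on }(x,y)}\geq \sigma\cdot p(x,y)$. Applying Lemma~\ref{lem:main} to $P$ (with $\beta'$ in place of $\beta$) yields a worst-case input $(x^*,y^*)$ on which $P$ succeeds with probability at most $2^{-\alpha k}$; since $P$ trivially succeeds on inputs with $|x\cap y|<k$ by outputting $0$, this input must satisfy $|x^*\cap y^*|\geq k$, so $\sigma\leq 2^{-\alpha k}/p(x^*,y^*)$.

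The main obstacle is lower bounding $p(x^*,y^*)$, since the adversary could choose $|x^*\cap y^*|$ as small as $k$, in which case a standard balls-into-bins computation gives $p(x^*,y^*)\geq k!/k^k=2^{-(\log_2 e)k+\OO{\log k}}$, costing a factor of roughly $e^k$. I would absorb this loss by appealing to the quantitative strength of Lemma~\ref{lem:main}: its linear-programming proof yields a hardness constant $\alpha$ that grows as $\beta$ and $\gamma$ are shrunk, so by taking $\beta$ in Lemma~\ref{lem:main} small enough one can arrange $\alpha>\log_2 e+\alpha'$ for a positive constant $\alpha'$. The theorem then follows with this $\alpha'$ (and a correspondingly smaller constant in front of $kn$ in the communication); verifying that the constants propagate consistently through the reduction is routine.
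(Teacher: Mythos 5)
Your reduction goes in the right direction (random permutation, $N=kn$, extract the intersections from the $k$ blocks), but it diverges from the paper at the crucial parameter choice, and the way you propose to patch the resulting loss does not actually work with Lemma~\ref{lem:main} as stated.

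The paper does \emph{not} reduce $\Search_{{N\choose k}}$ to $\vectorres{\Search_n}k$. It reduces $\Search_{{N\choose K}}$ with $K=\alpha k/4\ll k$: after permuting, the players keep \emph{any} $\alpha k/4$ of the nonzero outputs of $P$. Requiring only $K$ of the $\geq K$ intersection points to land in distinct blocks, the permutation succeeds with probability at least $\prod_{i=0}^{K-1}\frac{N-in}{N-i}\geq (1-\alpha/4)^{\alpha k/4}$, which is $2^{-\Theta(\alpha^2 k)}$ --- subexponential in $k$ once $\alpha$ is small, and therefore absorbable by the $2^{-\alpha K}=2^{-\alpha^2 k/4}$ bound from Lemma~\ref{lem:main}. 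Your version insists that \emph{every} block receive an intersection point, which, as you correctly compute, happens with probability only about $k!/k^k\approx 2^{-(\log_2 e)k}$. That loss is $2^{-1.44k}$, which is already \emph{larger} than anything Lemma~\ref{lem:main} can recover, since the lemma is stated with constants $0<\alpha\leq 1$.

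The fix you suggest --- shrink $\beta,\gamma$ in Lemma~\ref{lem:main} until $\alpha>\log_2 e+\alpha'$ --- is a genuine gap, not a routine verification. The constant $\alpha$ in Lemma~\ref{lem:main} is governed by the base of the Intersection Sampling Lemma (Lemma~\ref{lem:isl}), which is stated with a fixed $2^{k+1}$; together with the binomial ratio $\binom{n}{k}\binom{n+k}{k}/\binom{n+k}{2k}\approx 4^k/\sqrt k$ this forces $\alpha<1$ in the feasibility calculation. Pushing $\alpha$ above $\log_2 e$ would require re-proving the Intersection Sampling Lemma with per-step loss strictly below $4/e\approx 1.47$ (in place of the factor $2$, and below even the factor $3/2$ that Fact~6 gives), which in turn means rerunning Razborov's argument with a much smaller error constant and propagating the change through the induction. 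None of this is ``routine constant-chasing''; it is a different, harder, and unverified strengthening of the paper's tools. The paper sidesteps it entirely with the $K=\alpha k/4$ trick, which you should adopt. (Your observation that the worst-case input for $P$ must have $|x^*\cap y^*|\geq k$ is fine, as is the verification step; also note the paper must separately treat $n=O(1)$, where the reduction's hypothesis $k\leq\gamma N$ fails and an information-theoretic argument is used instead --- your proposal is silent on this corner case.)
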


\begin{proof}
We show that a protocol for $\vectorres{\Search_n}{k}$ can be used to solve $\Search_{{N\choose K}}$ for $K=\alpha k/4$ and
$N=kn$. Fix a protocol $P$ for $\vectorres{\Search_n}{k}$ with success probability $\sigma$.
Now consider the following protocol that acts on $N$-bit inputs $x,y$:
\begin{enumerate}
\item Apply a uniformly random permutation $\pi$ to $x$ and to $y$.
\item Run $P$ on $\pi(x)$, $\pi(y)$.
\item If $P$ makes at least $\alpha k/4$ outputs $\neq 0$, then output any $\alpha k/4$ of them (after undoing $\pi$).
\item Otherwise output 0.
\end{enumerate}
This protocol $P'$ uses the same communication as $P$. Note that $P'$ will work correctly and solve  $\Search_{{N\choose \alpha k/4}}$, whenever $P$ makes no errors and at least $\alpha k/4$ positions $i$ with $x_i=y_i=1$ end up in different blocks after applying the permutation $\pi$ (so that they can be produced as outputs by $P$),  assuming that $\alpha k/4$ such positions exist.

The probability of at least $\alpha k/4$ positions $i$ with $x_i=y_i=1$ being in different blocks (assuming that so many exist) is at least
\[
{N \over N} \cdot {N-n \over N-1} \cdots {N-\alpha (k/4) n+1 \over N-\alpha (k/4)+1}
\ge \left( 1 - \alpha/4  \right)^{\alpha k/4}
.
\]

So the success probability of $P'$ is at least $\sigma\cdot(1-\alpha/4)^{\alpha k/4}$ which defines $\alpha'$ via
\[2^{-\alpha K}= 2^{-\alpha\cdot (\alpha k/4)}\geq 2^{-\alpha' k}\cdot(1-\alpha/4)^{\alpha k/4}.\]
This allows us to choose a constant $\alpha'>0$, since $(1-\alpha/4)^{\alpha/4}\geq 2^{-\alpha^2/8}$ for $0\leq\alpha<1$.

The above argument only works, if $\alpha k/4\leq \gamma N\iff n\geq \alpha/(4\gamma)$. Since the right hand side involves only constants the opposite case can be covered by assuming $n=O(1)$, i.e., we now have to show that solving many size $O(1)$ instances is hard. But when the communication is less than $\epsilon k=\Theta(kn)$, it can easily be shown via an information theoretic argument, that it is impossible to solve $\vectorres{\Search_n}k$ with better success than $2^{-\Omega(k)}$: under the uniform distribution the players don't communicate enough to agree on a set of $k$ outputs of sufficient entropy.
\end{proof}

\begin{theorem}[SDPT for $\NDISJ_n$]\label{th:vector-disj}
There are constants $0<\alpha'',\beta''\leq 1$ such that every randomized protocol
for $\vectorres{\NDISJ_n}k$ with $\beta'' kn$ communication has
success probability $\sigma \leq 2^{-\alpha'' k}$.
\end{theorem}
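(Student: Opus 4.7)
My plan is to reduce $\Search_{{N\choose K}}$ directly to $\vectorres{\NDISJ_n}{k}$ (with $N=kn$ and $K$ a small constant fraction of $k$) and then invoke Lemma~\ref{lem:main}. Going instead via Theorem~\ref{th:ind-search} does not work cleanly: converting a $\vectorres{\NDISJ_n}{k}$ protocol into one for $\vectorres{\Search_n}{k}$ requires producing an intersection index for every intersecting instance, which can force $\Theta(kn)$ extra communication on top of $P$ and leaves no room under the budget $\beta kn$.

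Fix a small constant $\eps>0$ and set $K=\lfloor\eps k\rfloor$. Given a protocol $P$ for $\vectorres{\NDISJ_n}{k}$ with communication $c=\beta'' kn$ and worst-case success probability $\sigma$, I build a protocol $Q$ for $\Search_{{N\choose K}}$ as follows. Alice and Bob first apply a shared random permutation $\pi$ of $[N]$ to their inputs $x,y$ and split $\pi(x),\pi(y)$ into $k$ consecutive blocks of length $n$, giving pairs $(x'_i,y'_i)_{i=1}^k$; they then run $P$ on these pairs to obtain agreed-upon NDISJ values $v_1,\ldots,v_k$. Next, in a fixed order, for the first $K$ indices $i$ with $v_i=1$ (or for all such indices if there are fewer than $K$), Alice sends $x'_i$ to Bob, who computes $x'_i\wedge y'_i$ and broadcasts the resulting intersection indices (translated back through $\pi^{-1}$) to Alice. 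If at least $K$ indices are produced in total, the protocol outputs them; otherwise it outputs $0$.

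The total communication of $Q$ is at most $c+K(n+O(\log N))\le (\beta''+\eps+o(1))\,kn$, so choosing $\beta''$ slightly below $\beta-\eps$ (where $\beta$ is the constant of Lemma~\ref{lem:main}) keeps $Q$'s communication below $\beta N$ for $n$ large enough, while the condition $K\le\gamma N$ holds once $n\ge\eps/\gamma$. For correctness, note that if $|x\cap y|<K$ then $Q$ trivially outputs $0$ correctly, since fewer than $K$ intersection indices can ever be produced. If $|x\cap y|\ge K$ and $P$ is correct, then the marked blocks are exactly those containing at least one intersection bit and collectively cover all intersections of $x,y$, so the first $K$ marked blocks (or all of them, if fewer than $K$) yield at least $\min(K,|x\cap y|)=K$ distinct intersection indices and $Q$ succeeds. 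The worst-case success of $Q$ is therefore at least $\sigma$, and Lemma~\ref{lem:main} gives $\sigma\le 2^{-\alpha K}=2^{-\alpha\eps k}$, yielding the theorem with $\alpha''$ chosen just below $\alpha\eps$. The residual regime $n=O(1)$ is handled by the direct information-theoretic argument used at the end of the proof of Theorem~\ref{th:ind-search}.

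The main obstacle is justifying the cap in the processing step: restricting to the first $K$ marked blocks holds $Q$'s communication to $c+O(Kn)$ rather than the prohibitive $c+O(kn)$ that unrestricted processing would require, and one must verify that at least $K$ valid indices are still recovered. This reduces to the observation that when $P$ is correct, the marked blocks partition the intersections of $x,y$ into nonempty groups, so either there are $\ge K$ of them (each contributing an index) or there are fewer (and together they contain all $|x\cap y|\ge K$ intersections); either way $Q$ emits $K$ correct indices, and no probabilistic spread-of-intersections argument (as in the proof of Theorem~\ref{th:ind-search}) is required.
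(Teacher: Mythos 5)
Your proof is correct, and it takes a genuinely different route from the paper. The paper first reduces $\vectorres{\Search_n}k$ to $\vectorres{\NDISJ_n}k$ (Theorem~\ref{th:ind-search} is the intermediate target), and the way it avoids the $\Theta(kn)$ cost you worry about is by running $P$ repeatedly as a subroutine of a truncated binary search: for $s=2\log(1/\beta'')$ rounds each marked instance is halved using a fresh invocation of $P$, and a brute-force protocol costing only $kn/2^s$ finishes the job. Total communication is $(s+1)C+kn/2^s=\Oh{\beta''\log(1/\beta'')kn}$, and success drops to $\sigma^{s+1}$. So your objection that going via Theorem~\ref{th:ind-search} ``does not work cleanly'' and ``can force $\Theta(kn)$ extra communication'' is wrong: it is exactly the binary-search amortization that makes the paper's route work. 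What your argument actually shows is that, for the narrower goal of proving Theorem~\ref{th:vector-disj}, one can skip Theorem~\ref{th:ind-search} entirely: since $\Search_{{N\choose K}}$ only asks for $K=\lfloor\eps k\rfloor$ witnesses, you can afford to ship $K$ full blocks (cost $Kn\leq\eps kn$) after one run of $P$, and the pigeonhole observation that fewer-than-$K$ marked blocks must still jointly contain all $\geq K$ intersections means no spread argument or permutation is needed (indeed your $\pi$ plays no role and can be dropped). This is simpler and loses only one factor of $\sigma$ instead of $\sigma^{s+1}$, at the price of a smaller $\alpha''\approx\alpha\eps$; it does not, of course, reprove Theorem~\ref{th:ind-search}, which the paper states as a result in its own right. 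Both routes bottom out in Lemma~\ref{lem:main} and both defer the $n=O(1)$ regime to the same information-theoretic sketch, so the proposals are comparable in rigor there.
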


\begin{proof}
A protocol  $P$ for $\vectorres{\NDISJ_n}k$ with success probability $\sigma$
and communication $C\leq\beta'' kn$ can be used to build a protocol $P'$ for $\vectorres{\Search_n}k$
with slightly worse success probability:
\begin{enumerate}
\item Run $P$ on the original inputs and remember which blocks are accepted.
\item Run simultaneously (at most $k$) binary searches on the accepted blocks for a limited number of steps that halve the search space.
Stop after $s=2\log(1/\beta'')$ such steps. Each iteration is computed by running $P$ on
the parts of the blocks that are known to contain a position $j$ with $x_i(j)=y_i(j)=1$, halving the remaining
instance size each time.
\item Run the trivial protocol on each of the
remaining parts of the instances to look for an intersection there
(each remaining part has size $n/2^s$).
\end{enumerate}
This new protocol $P'$ uses communication
$(s+1)C+kn/2^s$ $=\Oh{\beta''\log(1/\beta'') kn}$. With probability at least $\sigma^{s+1}$, $P$ succeeds in all iterations,
in which case $P'$ solves $\vectorres{\Search_n}k$.

So setting $\beta''$ such that $\beta\geq\Oh{\beta''\log(1/\beta'')}$ and $\alpha''=\alpha'/(s+1)$ we get the desired reduction.
\end{proof}

\subsection{Our Approach}

In order to establish Lemma 1 we need to use a new approach to prove lower bounds in communication complexity. To see this we discuss the main techniques available. The rectangle bound cannot be used to prove Lemma 1, because there exist large monochromatic rectangles for all possible outputs except rejection, since the problem has low nondeterministic complexity: simply guess $k$ outputs and check. While large rectangles for the inputs that must be rejected need to have at least constant error due to Razborov's lower bound for the distributional complexity of $\DISJ_n$ \cite{razborov:disj}, simply considering those inputs cannot establish the error bound we seek. So the rectangle bounds appears to be unsuitable for our purpose. Note, however, that it might be possible to establish the strong direct product theorem for $\DISJ_n$ using the rectangle bound, but our approach via Lemma 1 cannot go this way.

Another main method for proving lower bounds for randomized communication complexity is the $\gamma_2$/generalized discrepancy method due to Linial and Shraibman/Sherstov \cite{linial:norms,sherstov:pattern} (and inspired by earlier work in \cite{klauck:lbqcc}). However, since these lower bounds work in the quantum setting, they cannot improve upon the tight quantum bounds in \cite{klauck:qsdpt}.

Finally one could employ information theoretic techniques like in \cite{bar-yossef:disj}. However, proving direct product statements with such methods seems to be difficult.

We show our main result using a technique based on linear programming. In the dual picture we still have to argue that all rectangles have certain properties, however, this time we allow positive and negative weights instead of a single hard distribution. This expresses the extra constraints given to us by the fact that a protocol partitions the communication matrix into rectangles instead of just covering it.

In the next subsections we first describe our linear program, then define a costly solution to the dual, and finally prove that this solution is feasible.

\subsection{The Linear Program}

In this section we provide a linear program, whose value gives a lower bound on the communication complexity of
solving $\Search_{{n\choose k}}$ with success probability $\sigma$. This will be our tool to establish Lemma \ref{lem:main}.

So consider any protocol for $\Search_{{n\choose k}}$ with success probability $\sigma$. We can assume that the protocol either rejects, or outputs $i_1,\ldots, i_k$. In the latter case we require that the inputs $x,y$ do actually intersect on those positions, or the other way around, that wrong outputs of this form have probability $0$. This we can assume, because Alice and Bob can simply check an output, before making it ``official". The communication overhead is just two bits to agree on the output being correct. Furthermore in this case every message sequence has a fixed particular set of outputs that Alice and Bob agree on, i.e., for any rectangle $R$ that corresponds to a leaf of the communication tree (for any value of the random coins) there are $k$ different positions $i_1,\ldots, i_k$ such that all inputs $x,y\in R$ intersect on them, or the protocol rejects (they need not be unique though). Otherwise the protocol would declare a non-rejecting output that is not correct for some inputs (note that by definition Alice and Bob must agree on an output in each terminal rectangle in the communication tree).

We can change such a protocol to a protocol with binary output in which inputs with intersection size $k$ are accepted with probability $\geq\sigma$, whereas all inputs with intersection size smaller than $k$ are accepted with probability 0.
Furthermore on all inputs acceptance happens with probability at most $1$. This latter trivial constraint is important in our proof.
The linear program is now as follows. We have real variables $w_R$ for {\it all} rectangles $R\subseteq\{0,1\}^n\times\{0,1\}^n$.

\begin{eqnarray}
&&\min \sum_R w_R  \hspace{2cm}\mbox{ s.t.} \\
&&\mbox{for all } x,y \mbox{ with }|x\cap y|<k: \sum_{R:x,y\in R}w_R=0\\
&&\mbox{for all } x,y \mbox{ with }|x\cap y|=k: \sum_{R:x,y\in R}w_R\geq\sigma\\
&&\mbox{for all } x,y \mbox{ with }|x\cap y|\geq k: \sum_{R:x,y\in R}w_R\leq 1\\
&&w_R\geq 0
\end{eqnarray}

Let $P$ be a randomized protocol with communication $c$ and success probability $\sigma$ for the problem of accepting inputs $x,y$ with $|x\cap y|=k$ while rejecting inputs $x,y$ with $|x\cap y|<k$ with certainty (acceptance probability on the other inputs does not matter).
$P$ can be used to create a solution to the above program with cost $2^c$: $P$ is a convex combination of deterministic protocols
$P_1,\ldots,P_m$ with probabilities $p_1,\ldots, p_m$, and each deterministic protocol $P_i$ corresponds to a partition of the inputs into $2^c$ rectangles. We restrict our attention to the rectangles on which protocols $P_i$ accept. The weight $w_R$ of a rectangle $R$ is the sum of the $p_i$ over all $P_i$ in which $R$ occurs as an accepting rectangle. Then for all inputs $x,y$ the value $\sum_{R:x,y\in R}w_R$ is
simply the acceptance probability of the protocol $P$, and the solution is feasible with cost $2^c$. Hence for any $\sigma$ the logarithm of the optimal cost of the program yields a lower bound on the necessary communication.

Recall that above we have not only required that the protocol $P$ accepts
inputs $x,y$ that intersect in exactly $k$ positions with some probability $\geq\sigma$, but we have also that for each accepting message sequence (i.e., each accepting rectangle $R$) there is a set of positions $I\subseteq\{1,\ldots,n\}$, $|I|=k$ such that for all inputs $x,y\in R$ we have $I\subseteq x\cap y$. Denote by ${\cal R}_v$ the set of all rectangles $R$ for which there is an $I\subseteq\{1,\ldots,n\}$, $|I|=k$ such
that all $x,y\in R$ satisfy  $I\subseteq x\cap y$. We can hence restrict the rectangles $R$ to come from  ${\cal R}_v$ in our LP. This also makes the constraints (2) superfluous.

We now take the dual of the program (with restricted rectangle set ${\cal R}_v$) and then show a lower bound by exhibiting a feasible solution of high cost.

\newpage
The dual is

\begin{eqnarray}
&&\max \sum_{x,y} \sigma\phi_{x,y}+\psi_{x,y} \hspace{2cm}\mbox{ s.t.} \\
&&\phi_{x,y}\geq 0\\
&&\psi_{x,y}\leq0\\
&&\mbox{ if } |x\cap y|\neq k\mbox{ then }\phi_{x,y}=0\\
&&\mbox{ for all }R\in {\cal R}_v:\sum_{x,y\in R} \phi_{x,y}+\psi_{x,y}\leq 1
\end{eqnarray}

The program asks us to put weights on the inputs, where inputs $x,y$ with intersection size $k$ should receive positive weights, and
some other inputs negative weights. The constraints demand that all rectangles in ${\cal R}_v$ either have small weight or contain enough negative weight to cancel most of the positive weight (we will discuss
this approach further in Section 5). However, we can only afford an overall amount of negative weight which is much smaller than the overall positive weight (by a factor of $\sigma$), because otherwise the objective function becomes negative.
Negative weights make it easier to satisfy the rectangle constraints (10), but deteriorate the cost function. Note that later on we will prove that all rectangles are either small, or the inequality in (10) can even be made negative. This is similar to the standard argument occurring with the usual rectangle bound: rectangles are either small, or they have large error. In the LP formulation both possibilities are rolled into one statement.

 Intuitively the LP formulation states that it is hard to cover the inputs with intersection size $k$ while keeping the partition constraints (4) satisfied. Note that the primal without (4), but keeping (2) has a very simple solution of cost $\exp(k\log n)$, even for $\sigma=1$. The issue with that solution is that it corresponds to a nondeterministic protocol, but not to a randomized one.

\subsection{The Solution}

Having found a dual program which will allow us to prove a lower bound, we start by defining distributions on inputs with different
intersection sizes in a similar way to \cite{razborov:disj}.

\begin{definition}

For $I=\{i_1,\ldots,i_k\}\subseteq\{1,\ldots,n\}$ with $|I|=k$ denote by $S_{I,n}$ the set of inputs $x,y\in\{0,1\}^n\times\{0,1\}^n$ such that $x\cap y=\{i_1,\ldots, i_k\}$. Furthermore let $T_{k,n}=\cup_{I:|I|=k} S_{I,n}$ denote the set of all inputs with intersection size $k$.

$\mu_{k,n,m}$ is a distribution on $\{0,1\}^n\times\{0,1\}^n$. All $x,y\not\in T_{k,n}$ have probability 0. Inputs in $T_{k,n}$  that also satisfy $|x|=|y|=m$ are chosen uniformly, i.e.,  with probability
\[\frac{1}{{n\choose m}{m\choose k}{n-m\choose m-k}}.\]
\end{definition}

An easy calculation shows
\begin{lemma}
\[\mu_{2k, n+k,m+k}(x,y)=\frac{{n\choose k}}{{n+k\choose 2k}}\cdot\mu_{k,n,m}(x',y'),\]
\[\mu_{k, n+k,m+k}(x,y)=\frac{1}{{n+k\choose k}}\cdot\mu_{0,n,m}(x',y'),\]
\[\mu_{k, n,m}(x,y)=\frac{1}{{n\choose k}}\cdot\mu_{0,n-k,m-k}(x',y'),\]
\[\mu_{k+1, n,m}(x,y)=\frac{n-k}{{n\choose k+1}}\cdot\mu_{1,n-k,m-k}(x',y'),\]
where $x'y'$ are inputs resulting from $x,y$, when $k$ intersecting positions are removed.
\end{lemma}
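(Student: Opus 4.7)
My plan is to verify each of the four identities by a direct pointwise calculation from the definition of $\mu_{k,n,m}$. Since for any $(x,y)$ in the support the value $\mu_{k,n,m}(x,y) = 1/[{n\choose m}{m\choose k}{n-m\choose m-k}]$ is a constant depending only on the parameters, each identity is really an equality of ratios of binomial coefficients in $(n,m,k)$, not a genuinely probabilistic statement.

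For each identity I would first confirm that if $(x,y)$ lies in the support of the left-hand distribution and $(x',y')$ is obtained by deleting $k$ of the intersecting positions, then $(x',y')$ lies in the support of the right-hand distribution. This is a simple size check: identity (3) sends $T_{k,n}$ with $|x|=|y|=m$ to $T_{0,n-k}$ with $|x'|=|y'|=m-k$; identity (1) sends $T_{2k,n+k}$ with $|x|=|y|=m+k$ to $T_{k,n}$ with $|x'|=|y'|=m$; identity (2) sends $T_{k,n+k}$ to $T_{0,n}$; identity (4) sends $T_{k+1,n}$ to $T_{1,n-k}$.

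Next I would compute each ratio of densities and simplify. In every case the factor counting the ways to place the symmetric difference (the ${n-m\choose m-k}$ term or its analogue) cancels between numerator and denominator, and the remaining expression collapses via the standard rearrangement ${n\choose m}{m\choose k} = {n\choose k}{n-k\choose m-k}$. As an illustration, for identity (3),
\[
\frac{\mu_{k,n,m}(x,y)}{\mu_{0,n-k,m-k}(x',y')}
= \frac{{n-k\choose m-k}{n-m\choose m-k}}{{n\choose m}{m\choose k}{n-m\choose m-k}}
= \frac{{n-k\choose m-k}}{{n\choose m}{m\choose k}}
= \frac{1}{{n\choose k}}.
\]
Identities (1) and (2) follow the same pattern with the binomial identity applied at shifted parameters; identity (4) is the only one with a small twist, namely the rewriting ${m\choose k+1} = \frac{m-k}{k+1}{m\choose k}$ which, after the standard cancellation, yields the prefactor $(k+1)/{n\choose k} = (n-k)/{n\choose k+1}$.

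There is no real obstacle here — the paper itself labels this an easy calculation — and the only thing to watch is the bookkeeping of shifted parameters. A useful conceptual sanity check, which I would run in parallel, is the following: conditionally on which size-$k$ subset $I$ of the intersection is deleted, $(x',y')$ is uniform on exactly the support of the target distribution and so has that distribution; the prefactor then merely records the probability that this particular $I$ was the deleted set, which immediately explains factors like $1/{n\choose k}$ in identity (3) and $1/{n+k\choose k}$ in identity (2).
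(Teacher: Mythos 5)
Your proof is correct and is exactly the calculation the paper has in mind — the paper itself offers no argument beyond the phrase ``an easy calculation shows,'' so a direct pointwise computation from the definition of $\mu_{k,n,m}$, using cancellation of the $\binom{n-m}{m-k}$-type factor and the identity $\binom{n}{m}\binom{m}{k}=\binom{n}{k}\binom{n-k}{m-k}$, is the intended route. Your observation that the identities are really binomial-coefficient facts (the density is constant on the support) and your conditioning sanity check are both sound; nothing is missing.
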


The solution to the dual program is based on the following intuition. Since the problem is symmetric, we should assign weights
uniformly for all inputs $x,y$ with a given intersection size (and set size). Naturally we put a good amount of positive weight on the inputs in $T_{k,n}$, and these are the only inputs with positive weights. We do not need to put negative weights on inputs with smaller intersection sizes,
since we already restricted the set of viable rectangles to ${\cal R}_v$, hence those inputs appear in no rectangle in the program. All we need to do is to find
a set of inputs to assign negative weights to, in order to enforce the rectangle constraints (the overall negative weight we can distribute is $\sigma$ times the overall positive weight, and we hope that the program stays feasible with a large objective function for $\sigma$ exponentially small in $k$). It turns out the $2k$-intersection inputs work just fine. This is because the $2k$-intersection inputs end up in many more rectangles than their weight suggests compared to the $k$-intersection inputs.

So we define a solution as follows (the input length is set to $n+k$ in the remainder of the section):

\begin{itemize}
\item The positive weight inputs are in $T_{k,n+k}$. Their weight is defined as $\phi_{x,y}=2^{\beta n}\mu_{k,n+k,m+k}(x,y)$.
\item The negative weight inputs are in $T_{2k,n+k}$. Their weight is $\psi_{x,y}=-2^{\beta n}2^{-\alpha k}\mu_{2k,n+k,m+k}(x,y)$.
\item For all other inputs $x,y:\phi_{x,y}=\psi_{x,y}=0$.
\end{itemize}
$\beta,\alpha>0$ are some constants that we choose later.
We can right away compute the value of this solution, before checking its feasibility. If we set $\sigma=2^{-\alpha k+1}$, then
the value is

\begin{eqnarray*}
&&\sum_{x,y} \sigma \phi_{x,y}+\psi_{x,y}\\&=&
\sum_{x,y\in T_{k,n+k}}\sigma2^{\beta n}\mu_{k,n+k,m+k}(x,y)\\&-&\sum_{x,y\in  T_{2k,n+k}}2^{\beta n}2^{-\alpha k}\mu_{2k,n+k,m+k}(x,y)\\&=&
2^{\beta n}2^{-\alpha k},
\end{eqnarray*}

 since both $\mu$'s are distributions. So for $\alpha k\leq(\beta/2) n$ we get a linear lower bound on the communication, and we will require $k\leq \gamma n/2$ for some $\gamma\leq\beta$ and set $\alpha=1/2$. Hence, all that remains to establish Lemma 1 is showing feasibility of our solution for these parameters.

The ``sign" constraints (7,8,9) are obviously satisfied, so the only thing we need to check are the rectangle constraints (10). The following lemma is the main ingredient of the proof.

\begin{lemma} [Intersection Sampling Lemma] \label{lem:isl}
There is a constant $\gamma>0$, such that
for each rectangle $R=A\times B\subseteq\{0,1\}^n\times\{0,1\}^n$ with $\mu_{0,n,m}(R)\geq2^{-\gamma n}$  and all $k\leq \gamma n/2$  we have $\mu_{k,n,m}(R)\geq\mu_{0,n,m}(R)/2^{k+1}$.
\end{lemma}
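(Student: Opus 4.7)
The plan is induction on $k$. The base case $k = 0$ is trivial, and $k = 1$ is Razborov's main combinatorial lemma from \cite{razborov:disj}, which states that any rectangle with $\mu_{0,n,m}(R) \geq 2^{-\gamma n}$ satisfies $\mu_{1,n,m}(R) \geq c\cdot\mu_{0,n,m}(R) - 2^{-\gamma' n}$ for a suitable constant $c > 0$ (whose exact value dictates what final constant the induction yields).

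For the inductive step, I would use the identity
\[\mu_{k, n, m}(R) = \frac{1}{n}\sum_{i=1}^n \mu_{k-1, n-1, m-1}(R^{(i)}),\]
which follows from the ratios in Lemma 4 by sampling $i \in [n]$ uniformly and conditioning on $i \in x \cap y$: the resulting distribution of $(x \setminus \{i\}, y \setminus \{i\})$ is exactly $\mu_{k-1, n-1, m-1}$ on $[n] \setminus \{i\}$. Here $R^{(i)} = A^{(i)} \times B^{(i)}$ with $A^{(i)} = \{x \setminus \{i\} : x \in A,\ i \in x\}$ and symmetrically for $B^{(i)}$. The same identity at $k = 1$ writes $\mu_{1, n, m}(R)$ as the analogous average of $\mu_{0, n-1, m-1}(R^{(i)})$.

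Now partition the positions $i$ into ``large'' (those with $\mu_{0, n-1, m-1}(R^{(i)}) \geq 2^{-\gamma(n-1)}$) and ``small''. For large $i$, the induction hypothesis applied to $R^{(i)}$ in dimension $n-1$ gives $\mu_{k-1, n-1, m-1}(R^{(i)}) \geq \mu_{0, n-1, m-1}(R^{(i)})/2^{k}$; the small $i$'s contribute a total averaged error of at most $2^{-\gamma(n-1)}$. Summing and then applying Razborov's lemma at the outer level to substitute $\mu_{1, n, m}(R) \geq c\cdot\mu_{0, n, m}(R)$ yields
\[\mu_{k, n, m}(R) \geq \frac{1}{2^{k}}\left(\mu_{1, n, m}(R) - 2^{-\gamma(n-1)}\right) \geq \frac{c\cdot\mu_{0, n, m}(R)}{2^{k}} - \frac{2^{-\gamma(n-1)}}{2^{k}},\]
which matches $\mu_{0, n, m}(R)/2^{k+1}$ up to negligible additive errors, provided $c \geq 1/2$.

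The main obstacle will be calibrating constants so that the induction actually closes at the stated $1/2^{k+1}$ rate: one must verify that Razborov's lemma supplies a multiplicative factor of at least $1/2$ (not merely some smaller constant, which would accumulate to $1/C^{k}$ for some $C > 2$ under naive iteration), and one must choose $\gamma$ small enough that the additive errors $2^{-\gamma(n-1)}$ stay dominated by the target $\mu_{0,n,m}(R)/2^{k+1} \geq 2^{-\gamma n - k - 1}$ at every induction level --- feasible because $k \leq \gamma n/2$ ensures the effective dimension stays at least $n/2$, so errors never grow. The lemma's statement appears deliberately loose (the ``$+1$'' in $2^{k+1}$ being safety slack above Razborov's base constant) precisely to absorb this calibration.
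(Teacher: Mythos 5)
Your overall plan---induction on $k$ with Razborov's lemma as the base case, tracking additive errors---is the right idea, and the identity $\mu_{k,n,m}(R)=\frac{1}{n}\sum_i\mu_{k-1,n-1,m-1}(R^{(i)})$ you propose is correct. But the concrete inductive step, as you set it up, does not close, and the gap is precisely in the ``calibration'' you flag at the end but do not resolve.

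You declare $R^{(i)}$ ``small'' when $\mu_{0,n-1,m-1}(R^{(i)}) < 2^{-\gamma(n-1)}$ and charge the dropped mass at most $2^{-\gamma(n-1)}$. After dividing by $2^k$ and applying Razborov, you want
$c\cdot\mu_{0,n,m}(R) - 2^{-\gamma(n-1)} \geq \mu_{0,n,m}(R)/2$, i.e.\ $(c-\tfrac12)\mu_{0,n,m}(R)\geq 2^{-\gamma(n-1)}$. But the only lower bound available on $\mu_{0,n,m}(R)$ is the hypothesis $\mu_{0,n,m}(R)\geq 2^{-\gamma n}$, and $2^{-\gamma(n-1)} = 2^{\gamma}\cdot 2^{-\gamma n}$ is \emph{larger} than that. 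So for rectangles near the threshold, the discarded mass exceeds the main term by a factor $2^{\gamma}$, and shrinking $\gamma$ does not help because the dropped term and the threshold scale together. The remark that ``one must choose $\gamma$ small enough that the additive errors $2^{-\gamma(n-1)}$ stay dominated by $2^{-\gamma n-k-1}$'' cannot be satisfied: the ratio $2^{-\gamma(n-1)}/2^{-\gamma n-k-1}=2^{\gamma+k+1}>1$ for every $\gamma>0$. The structural reason you are stuck is that your induction lowers the ambient dimension (from $n$ to $n-1$) when you invoke the IH on the $R^{(i)}$, so the IH's largeness threshold $2^{-\gamma(n-1)}$ sits \emph{above} the hypothesized size $2^{-\gamma n}$ of $R$.

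The paper's proof avoids this by decoupling the two roles that your single constant $\gamma$ is trying to play. It first proves a strengthened statement (its Lemma~\ref{islemma}) carrying an explicit additive slack term $k\cdot 2^{-\delta(n-k+1)}$, with a \emph{separate} constant $\delta$ (from Razborov's Fact~6) used to define what ``small'' means, and sets $\gamma=\delta/3$ only at the very end so that $2^{-\gamma n}\gg k\cdot 2^{-\delta(n-k)}$ in the regime $k\leq\gamma n/2$. It also uses a different decomposition: instead of peeling off one coordinate $i$, it sums over all $k$-subsets $I$, forms the rectangle $R_I$ fixing $x_i=y_i=1$ for $i\in I$, projects to $R_I'$ in dimension $n-k$, applies Razborov there (threshold $2^{-\delta(n-k)}$), and---crucially---applies the inductive hypothesis to $R$ \emph{itself in dimension $n$}, not to a lower-dimensional subrectangle. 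Because $R$ stays fixed, the IH's largeness condition $\mu_{0,n,m}(R)\geq 2^{-\gamma n}$ is automatically met and the errors are one-shot rather than compounding across a shrinking dimension. If you instead keep your coordinate-at-a-time decomposition, you can still make it work, but you must (i) prove an additive-error version of the lemma valid for \emph{all} rectangles (so there is no large/small partition inside the IH step), and (ii) use two constants $\gamma<\delta$ with Razborov's threshold in the error terms, verifying that $\sum_{j<k}2^{-\delta(n-j)}/2^j \leq O(2^{-\delta(n-k)})$ is eventually negligible against $2^{-\gamma n}/2^{k+1}$. As written, with a single $\gamma$ and a multiplicative IH, the induction does not go through.
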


This lemma is a generalization of Razborov's main lemma in \cite{razborov:disj}, which is essentially the same statement for $k=1$.
We shall give the proof in the next section, however, now it's time to show that our solution to the dual program is feasible.

So let us check the rectangle constraints. If $R$ is a rectangle first suppose that $\mu_{k,n+k,m+k}(R)\leq 2^{-\beta n}$. In
this case $\sum_{x,y\in R} \phi_{x,y}\leq \sum_{x,y\in R\cap T_{k,n+k}} 2^{\beta n}\mu_{k,n+k,m+k}(x,y)\leq 1$.

Hence we need only worry about large rectangles $R\in{\cal R}_v$. For each such $R$ there is a set $I=\{i_1,\ldots, i_k\}$ of size $k$ such that all inputs $x,y$ in $R$ intersect on $I$. If we remove those positions from the universe $\{1,\ldots, n+k\}$ ($I$ is actually unique for all rectangles that contain inputs with positive weights at all) we can consider $R$ as a rectangle $R'$ in $\{0,1\}^{n}\times\{0,1\}^{n}$.
Clearly $\mu_{0,n,m}(R')\geq \mu_{k,n+k,m+k}(R)$, since all inputs in $R\cap T_{k,n+k}$ have a corresponding input in $R'\cap T_{0,n}$, and for
each $x,y$:$\mu_{0,n,m}(x,y)=\mu_{k,n+k,m+k}\cdot {n+k\choose k}$. So the intersection sampling lemma is applicable to $R'$  as long as we set $\beta=\gamma$ and $k\leq \gamma n/2$. 
The lemma tells us that $\mu_{k,n,m}(R')\geq\mu_{0,n,m}(R')/2^{k+1}$.

Consequently,
\begin{eqnarray}
\mu_{2k,n+k,m+k}(R)&=&\mu_{k,n,m}(R')\cdot\frac{{n\choose k}}{{n+k\choose 2k}}\\
&\geq&\mu_{0,n,m}(R')\cdot\frac{{n\choose k}}{{n+k\choose 2k}2^{k+1}}\\
&=&\mu_{k,n+k,m+k}(R)\cdot\frac{{n\choose k}{n+k\choose k}}{{n+k\choose 2k}2^{k+1}}\\
&\geq&\mu_{k,n+k,m+k}(R)\cdot\Omega(2^k/\sqrt k),
\end{eqnarray}
where (11) and (13) follow from Lemma 4, (12) from Lemma 5, and (14) using Sterling approximation.

So, surprisingly, the intersection sampling lemma lets us conclude that $R$ contains a lot more weight on $2k$-intersections inputs than on $k$-intersection inputs. Of course this is really a consequence of the fact that we forced the original protocol to be correct
in its (non--rejecting) outputs, and hence the fact that every rectangle we consider has one set of $k$ positions that all its inputs intersect in.

So \begin{eqnarray*}
\sum_{x,y\in R}\phi_{x,y}+\psi_{x,y}
&=&\sum_{x,y\in R\cap T_{k,n+k}}2^{\beta n}\mu_{k,n+k,m+k}(x,y)\\ &-&\sum_{x,y\in R\cap T_{2k,n}}2^{\beta n}\mu_{2k,n+k,m+k}(x,y)2^{-\alpha k}\\&\leq&0.\end{eqnarray*}

The rectangle constraints are satisfied and our program is indeed feasible.
We have the parameters $\beta=\gamma$, and $\sigma=2^{-\alpha k+1}$, and $\alpha=1/2$, as well as $k\leq\gamma n/2$.
Overall our solution to the dual proves that no protocol with communication $\beta n$ can solve $\Search_{{n+k\choose k}}$ with success better than $\sigma$, as long as $k\leq \gamma n/2$. By adjusting constants this proves Lemma 1.

\subsection{The Intersection Sampling Lemma}

In this section we prove Lemma 5 which we have used to establish the feasibility of the solution to the linear program exhibited in the
previous section.

The base of the induction proof will be provided by Raz\-borov's main lemma from \cite{razborov:disj} restated as follows:

\begin{fact}
There is a constant $\delta>0$, such that for all $m\in\{n/4-\delta n,\ldots,n/4\}$ and for every rectangle $R\subseteq\{0,1\}^n\times\{0,1\}^n$ with $\mu_{0,n,m}(R)\geq 2^{-\delta n}$ we have \[\mu_{1,n,m}(R)\geq\mu_{0,n,m}(R)/(3/2).\]
\end{fact}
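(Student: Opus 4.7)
The plan is to follow Razborov's original argument from \cite{razborov:disj}. Since we must give up only a multiplicative $3/2$ factor rather than a polynomial one, the strategy has to be a careful local modification of disjoint pairs rather than a crude union bound. Writing $\mu_{0,n,m}(R)=2^{-s}$ with $s\le \delta n$, the target is to produce, for a $2/3$-fraction of the disjoint pairs $(x,y)\in R$, a companion intersection-$1$ pair $(x',y')\in R$ obtained by a local swap: replace one element in each of $x,y$ by a common new element $j\notin x\cup y$. A careful count of how many $(x,y,j)$ triples map to each resulting pair then yields $\mu_{1,n,m}(R)\ge \mu_{0,n,m}(R)/(3/2)$, using that in the regime $m\in\{n/4-\delta n,\ldots,n/4\}$ the ratio $\mu_1(x',y')/\mu_0(x,y)$ is $\Theta(1)$.

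The implementation has three ingredients. First, \emph{pruning}: iteratively delete from $A$ every $x$ that is disjoint from too few $y\in B$, and symmetrically for $B$. Because $\mu_{0,n,m}(R)\ge 2^{-\delta n}$, the process stops at a sub-rectangle $A'\times B'\subseteq R$ of density still $2^{-O(\delta n)}$ in which every surviving $x\in A'$ has many disjoint partners in $B'$. Second, \emph{shadow expansion}: use a Kruskal--Katona or isoperimetric-type bound to argue that a subset $A'\subseteq {[n]\choose m}$ of density $2^{-O(\delta n)}$ must have a large ``swap shadow'', i.e., for most $x\in A'$ there are many pairs $(i,j)$ with $i\in x$, $j\notin x$, such that $(x\setminus\{i\})\cup\{j\}\in A'$, and similarly for $B'$. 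Third, \emph{combine}: for a typical disjoint $(x,y)\in A'\times B'$, most $j\notin x\cup y$ simultaneously serve as valid swap targets for both $x$ and $y$, producing an intersection-$\{j\}$ pair inside $R$; summing over valid $j$ and comparing to the $\mu_{1,n,m}$ count delivers the inequality.

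The main obstacle is the sharpness of the shadow step: a crude application of Kruskal--Katona would not give the $3/2$ factor. Razborov's argument instead exploits that for $m\approx n/4$ one has $n-2m\approx n/2$, so the pool of candidate new elements $j\notin x\cup y$ is very large, giving a lot of room for the local swap to succeed on both sides simultaneously. One must then choose $\delta>0$ small enough that the density loss from pruning, the change of measure between $\mu_0$ and $\mu_1$, and the residual failure probability of the shadow step all combine into a ratio better than $2/3$. Checking the compatibility of these three quantitative estimates, and in particular obtaining a swap-shadow bound strong enough to survive the combination with pruning, is the delicate piece of the argument.
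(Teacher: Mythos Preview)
The paper does not prove this Fact; it simply cites Razborov \cite{razborov:disj} and, in the paragraph and footnote following the statement, describes the small modifications needed to his proof: change the mixing proportions of the intersection-$0$ and intersection-$1$ parts, push the constant in the definition of ``$x$-bad'' rows closer to $1$, and absorb the $1+\delta$ perturbations caused by letting $m$ drop slightly below $n/4$. That is the entirety of the paper's argument.

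Your sketch is \emph{not} Razborov's argument, and its second ingredient contains a genuine gap. Razborov never uses Kruskal--Katona or a shadow bound. He parametrizes the hard distribution by a random partition of $[n]$ into tiny blocks, defines ``$x$-bad'' rows and ``$y$-bad'' columns via conditional-probability thresholds, shows the bad part has exponentially small measure, and on the good part couples $\mu_0$ and $\mu_1$ block by block. The factor $3/2$ comes from the badness threshold and can be driven toward $1$ by shrinking $\delta$, exactly as the paper notes.

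Your ``swap shadow'' step claims that for most $x\in A'$ many Johnson-graph neighbors $(x\setminus\{i\})\cup\{j\}$ lie in $A'$. That is a statement about edges \emph{inside} $A'$ in the Johnson graph, not about its shadow; Kruskal--Katona and the usual isoperimetric inequalities bound $|\partial A'|$, not the internal edge density. For a random $A'\subseteq\binom{[n]}{m}$ of density $2^{-\Theta(\delta n)}$, a typical $x\in A'$ has only a $2^{-\Theta(\delta n)}$ fraction of its Johnson neighbors in $A'$, so the claim is false without extra structure. Your pruning step, which trims rows and columns according to the number of \emph{disjoint partners on the other side}, does not impose any such one-sided Johnson-closure on $A'$ or $B'$. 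To rescue a swap-style argument you would need a mechanism that lets you change a single coordinate of $x$ and of $y$ while staying in $A\times B$ with probability close to $1$; Razborov's block decomposition is precisely such a mechanism, and without something of that kind the combination step cannot deliver the $3/2$ constant.
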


The factor $3/2$ corresponds to error $2/5$ in the original statement, but it can be seen easily, that any error $1/2-\epsilon$ can be achieved in Razborov's proof by reducing the size of the rectangles considered suitably (i.e., by lowering the communication bound $\delta n$ considered). Also Razborov fixes $m=n/4$, but slightly smaller sets can be accommodated in the proof.\footnote{The proof needs to be adjusted in several ways. First of all, instead of mixing the distributions on intersection size 1 and 0 in the proportions 1/4 and 3/4 we need to mix them uniformly. Secondly, the constant 1/3 in the definition of $x$-bad can be replaced with a constant close to 1, and consequently the numbers in Claims 3 and 4 need to be adjusted. A bit more troublesome is allowing $m$ to be slightly smaller than $n/4$, since this makes Fact 2 false, although it remains approximately true, tilting all other estimates by $1+\delta$ factors.}

We prove the following statement by induction.

\begin{lemma} \label{islemma}
There is a constant $\gamma>0$, such that
for $m=n/4$ and every rectangle $R=A\times B\subseteq\{0,1\}^n\times\{0,1\}^n$ with $\mu_{0,n,m}(R)\geq2^{-\gamma n}$ and all $k\leq \gamma n/2$  we have
$$\mu_{k,n,m}(R)\geq\mu_{0,n,m}(R)/2^k-k\cdot 2^{-\delta (n-k+1)}.$$
\end{lemma}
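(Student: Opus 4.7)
The plan is to reduce the $k$-th order inequality to Razborov's base case via the telescoping identity
\[
\mu_{k,n,m}(R) \;=\; \frac{1}{{n\choose k-1}} \sum_{|I|=k-1} \mu_{1,n-k+1,m-k+1}(R_I),
\]
paired with the analogous identity $\mu_{k-1,n,m}(R) = \frac{1}{{n\choose k-1}}\sum_{|I|=k-1}\mu_{0,n-k+1,m-k+1}(R_I)$. Here, for a $(k-1)$-subset $I$ of $[n]$, $R_I$ denotes the rectangle in $\{0,1\}^{n-k+1}\times\{0,1\}^{n-k+1}$ obtained by slicing $R=A\times B$ at the coordinates in $I$ being $1$ and projecting those coordinates out. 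Both identities are immediate consequences of parts 3 and 4 of Lemma 4: projecting out a fixed $(k-1)$-subset $I$ of the intersection of $(x,y)\in T_{k,n}$ (respectively $T_{k-1,n}$) produces $(x',y')$ distributed according to $\mu_{1,n-k+1,m-k+1}$ (respectively $\mu_{0,n-k+1,m-k+1}$), and the factor $1/{n\choose k-1}$ together with the sum over $I$ absorbs the multiplicity, since each $(x,y)\in T_{k,n}$ is counted exactly $k$ times, once per $(k-1)$-subset of its intersection.

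Since these identities express $\mu_{k,n,m}(R)$ and $\mu_{k-1,n,m}(R)$ as the same normalized sum over the same rectangles $R_I$, Razborov's Fact 2 applied rectangle-by-rectangle at effective dimension $n'=n-k+1$ with set size $m'=m-k+1$ gives
\[
\mu_{1,n-k+1,m-k+1}(R_I) \;\geq\; \mu_{0,n-k+1,m-k+1}(R_I)/2 - 2^{-\delta(n-k+1)},
\]
valid for every $R_I$ (when $\mu_{0,n-k+1,m-k+1}(R_I)<2^{-\delta(n-k+1)}$ the right-hand side is non-positive and the bound reduces to $\mu_1\geq 0$). Summing over $I$ and dividing by ${n\choose k-1}$ yields the one-step recursion $\mu_{k,n,m}(R)\geq \mu_{k-1,n,m}(R)/2 - 2^{-\delta(n-k+1)}$.

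Iterating this recursion $k$ times and writing $\mu_j:=\mu_{j,n,m}(R)$, the chain $\mu_j\geq\mu_{j-1}/2-2^{-\delta(n-j+1)}$ for $j=k,k-1,\ldots,1$ telescopes into
\[
\mu_k \;\geq\; \mu_0/2^k - \sum_{\ell=0}^{k-1}2^{-\delta(n-k+1+\ell)-\ell} \;\geq\; \mu_0/2^k - k\cdot 2^{-\delta(n-k+1)},
\]
where the last step bounds each of the $k$ summands crudely by $2^{-\delta(n-k+1)}$. The main obstacle will be bookkeeping the set-size parameter: Fact 2 requires $m'\in[n'/4-\delta n',\,n'/4]$, and at step $j$ of the recursion we apply Fact 2 with $(n',m')=(n-k+j,\,m-k+j)$, so starting from $m=n/4$ we have $n'/4-m' = 3(k-j)/4$. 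Keeping this inside Razborov's window under $k\leq \gamma n/2$ forces $\gamma$ to be a small constant factor below $\delta$ (any $\gamma\leq 2\delta/3$ works). Conveniently, the same choice of $\gamma$ is also exactly what is needed downstream for the error term $k\cdot 2^{-\delta(n-k+1)}$ to be dominated by $\mu_{0,n,m}(R)/2^{k+1}$ in the passage from this statement to Lemma 5.
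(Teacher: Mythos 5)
Your proposal is correct and takes essentially the same route as the paper: decompose $\mu_{j,n,m}(R)$ and $\mu_{j-1,n,m}(R)$ as normalized sums over the slices $R_I$ indexed by $(j-1)$-subsets of the intersection, apply Razborov's Fact~6 slice-by-slice to get a one-step recursion, and iterate (the paper phrases the iteration as an induction on $k$, which is formally the same telescope). The only genuine difference is cosmetic but pleasant: instead of separating the bad indices $\mathcal I=\{I:\mu_{0,n-j+1,m-j+1}(R_I')\le 2^{-\delta(n-j+1)}\}$ and bounding their total mass, you fold the small-rectangle case into the single inequality $\mu_1\ge\mu_0/2-2^{-\delta n'}$, which holds unconditionally and, once summed over $I$ and divided by $\binom{n}{j-1}$, delivers the additive $2^{-\delta(n-j+1)}$ error per step without any case analysis.
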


In fact we choose $\gamma=\delta/3$ (and assume $k\leq \gamma n/2$). Then the above statement implies Lemma 5 as stated in the previous subsection.

\begin{proof}[of Lemma~\ref{islemma}]

Clearly the base of the induction over $k$ is true by Fact 6. So consider any rectangle  $R$, such that $\mu_{0,n,m}(R)\geq2^{-\gamma n}$ and assume
 $k\leq \gamma n/2$.

For all $I\subseteq\{1,\ldots,n\}$ with $|I|=k$ let's denote by $R_I$ the rectangle that is the intersection of $R$ with the rectangle that fixes
$x_i=y_i=1$ for all $i\in I$.
Now $R\cap\{T_{k,n}\cup\cdots\cup T_{n,n}\}=\cup_{I:|I|=k} R_I$. Furthermore every input $x,y\in T_{k+1,n}\cap R$ lies in exactly $k+1$ rectangles
\ $R_I$, while all inputs $x,y\in T_{k,n}\cap R$ lie in exactly one $R_I$. Hence

\[\mu_{k+1,n,m}(R)=\sum_{I:|I|=k}\mu_{k+1,n,m}(R_I)/(k+1).\]

Again we can reinterpret the $R_I$ as rectangles $R_I'$ in the set $\{0,1\}^{n-k}\times\{0,1\}^{n-k}$, because each $R_I$ has all its inputs
intersecting on the set $I$, so we only consider what happens on the other positions.

Note that $\mu_{0,n-k,m-k}(R_I')=\mu_{k,n,m}(R_I)\cdot {n\choose k}$ by Lemma 4, so we can conclude that
$\mu_{0,n-k,m-k}(R_I')$ is large whenever $\mu_{k,n,m}(R_I)$ is.\clearpage

Let ${\cal I}=\{I\subseteq\{1,\ldots,n\}:|I|=k\wedge\mu_{0,n-k,m-k}(R_I')\leq 2^{-\delta (n-k)}\}$. Then

\begin{eqnarray}
\sum_{I\in\cal I}\mu_{k,n,m}(R_I)\leq\sum_{I\in\cal I}\mu_{0,n-k,m-k}(R_I')/{n\choose k}\leq 2^{-\delta (n-k)}.\end{eqnarray}

Now

\begin{eqnarray}
\mu_{k+1,n,m}(R)&=&\sum_{I:|I|=k}\mu_{k+1,n,m}(R_I)/(k+1)\\
&=&\sum_{I:|I|=k}\mu_{1,n-k,m-k}(R_I')\cdot\frac{n-k}{{n\choose k+1}\cdot(k+1)}\\
&\geq&\sum_{I:|I|=k\wedge I\not\in\cal I}\mu_{1,n-k,m-k}(R_I')\cdot\frac{{n-k}}{{n\choose k+1}\cdot(k+1)}\\
&\geq&\sum_{I:|I|=k\wedge I\not\in\cal I}\mu_{0,n-k,m-k}(R_I')\cdot\frac{{(n-k)/(3/2)}}{{n\choose k+1}\cdot(k+1)}\\
&\geq&\sum_{I:|I|=k\wedge I\not\in\cal I}\mu_{k,n,m}(R_I)\cdot\frac{(n-k){n\choose k}}{{n\choose k+1}\cdot(k+1)\cdot 2}\\
&=&\sum_{I:|I|=k\wedge I\not\in\cal I}\mu_{k,n,m}(R_I)\cdot\frac{1}{2}\\
&\geq&\sum_{I:|I|=k}\mu_{k,n,m}(R_I)\cdot\frac{1}{2}-2^{-\delta (n-k)}\\
&\geq&\sum_{I:|I|=k}\mu_{0,n,m}(R_I)\cdot\frac{1}{2^{k+1}}-(k+1)2^{-\delta (n-k)}.
\end{eqnarray}

(17), (20) are via Lemma 4, (19) uses Fact 6, (22) is from (15), and (23) uses the induction hypothesis.
\end{proof}

\section{Applications}

\subsection{Communication-Space Tradeoffs for \\ Boolean Matrix Products}\label{seccommspace}

In this section we use the strong direct product result for
the communication complexity of Disjointness Theorem~\ref{th:vector-disj} to prove tight
communication-space tradeoffs.

\begin{theorem}\label{thcommspacemv}
Every bounded-error protocol with communication $C$ in which Alice and Bob have bounded space
$S$ and that computes the Boolean matrix-vector product, satisfies
$CS=\Om{N^{2}}$.
\end{theorem}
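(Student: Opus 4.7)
The plan is to use the standard epoch-based argument for communication-space tradeoffs, with Theorem~\ref{th:vector-disj} as the ingredient that controls each epoch.

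I would start from any bounded-error protocol $P$ for Boolean matrix-vector product with communication $C$ and space $S$, and partition its execution into $N/k$ consecutive blocks, each producing exactly $k$ outputs, for a parameter $k$ to be chosen. Let $c_j$ be the communication inside block $j$, so that $\sum_j c_j = C$; at every block boundary the joint state of the two circuits is at most $O(S)$ bits.

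To apply Theorem~\ref{th:vector-disj} I would embed an independent $k$-fold $\NDISJ_{N/k}$ into each block. Concretely, partition the $N$ columns into $k$ sub-blocks of size $N/k$; take $b=b_1\cdots b_k$ with each $b_l\in\{0,1\}^{N/k}$ drawn independently from Razborov's hard distribution for $\NDISJ_{N/k}$; and take each row $A[i,:]$ to be supported only in sub-block $\pi(i):=((i-1)\bmod k)+1$, with its content inside that sub-block drawn independently from the same hard distribution. Then each output is $c_i=\NDISJ_{N/k}(A[i,:]|_{\pi(i)},b_{\pi(i)})$, and because $\pi$ cycles through all $k$ values on every $k$ consecutive indices, the $k$ outputs inside any block $j$ touch all $k$ distinct sub-blocks $b_1,\ldots,b_k$ and therefore form a genuine $k$-fold Disjointness instance on $(N/k)$-bit strings.

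Next I would convert each block into a standalone protocol by publicly broadcasting its $O(S)$-bit entry state at the start of the block; summed over the $N/k$ blocks this adds $(N/k)\cdot O(S)$ bits of communication to the augmented protocol. For each block $j$, the resulting sub-protocol has communication $c_j+O(S)$ and solves an independent $k$-fold $\NDISJ_{N/k}$ on the embedded inputs (with the rest of $A,b$ sampled from public randomness); Theorem~\ref{th:vector-disj} therefore forces
\[
c_j + O(S) \;\geq\; \beta''\cdot k\cdot (N/k) \;=\; \beta''\,N.
\]
Summing over the $N/k$ blocks and multiplying by $k$ gives $Ck+O(NS)\geq \beta''\,N^2$. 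Setting $k:=S$ yields $CS=\Omega(N^2)$ whenever $S$ is a sufficiently small constant multiple of $N$; in the remaining regime $S=\Omega(N)$ the trivial bound $C\geq N$ (each of the $N$ output bits has to be communicated between the two players) already gives $CS=\Omega(N^2)$.

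The hard part will be justifying that each block, once its entry state is revealed, is truly a $k$-fold Disjointness protocol of the kind controlled by Theorem~\ref{th:vector-disj}. The delicate point is that the entry state is a function of both players' full inputs, so conditioning on a particular state value distorts the hard distribution on the embedded instance; since only $O(S)$ bits of conditioning are applied per block, an averaging argument on a typical entry state should leave the distribution of the embedded $(x_l,y_l)_l$ close to the original hard distribution, and the SDPT will then apply with only a constant-factor loss in the parameters --- standard for this class of communication-space-tradeoff arguments.
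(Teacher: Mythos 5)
Your slicing skeleton and the embedding of $k$ independent $\NDISJ_{N/k}$ instances into one block of outputs match the paper's, but the way you handle the $O(S)$-bit state at a block boundary has a real gap. You want to broadcast the actual entry state and then condition on a ``typical'' value, arguing that $O(S)$ bits of conditioning only mildly distorts the hard distribution on the embedded instance. That step is not covered by Theorem~\ref{th:vector-disj}: the theorem is a worst-case success-probability bound for ordinary public-coin protocols, whereas after broadcasting the state you have a protocol whose starting configuration is itself a function of both players' full inputs, and your ``sub-protocol'' on block $j$ is only correct with good probability \emph{conditioned} on the state, over a distorted input distribution. Turning this into a contradiction with Theorem~\ref{th:vector-disj} would require a distributional SDPT together with a quantitative ``$O(S)$ bits of leakage don't ruin the hard distribution'' lemma, neither of which you supply; you flag this as the hard part, but it is exactly the part that must be filled in, and the averaging heuristic does not obviously close it.

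The paper avoids this entirely with a cleaner trick. It slices by \emph{communication} (each slice gets $\beta N$ bits, so there are $C/(\beta N)$ slices), and instead of broadcasting the true entry state it \emph{replaces} the $\le S$-bit entry state with a uniformly random guess, independent of the inputs. With probability $2^{-S}$ the guess equals the true state, in which case the slice behaves exactly as in the original run, so the slice --- now a bona fide public-coin protocol with $\beta N$ bits of communication --- still produces all $k$ of its outputs correctly with probability at least $(1/2)\cdot 2^{-S}$ on every input, including the restricted inputs that encode a $k$-fold $\NDISJ_{N/k}$. Comparing this against the $2^{-\alpha'' k}$ upper bound from Theorem~\ref{th:vector-disj} immediately gives $k=O(S)$ outputs per slice, and summing $N \le (C/\beta N)\cdot O(S)$ gives $CS=\Omega(N^2)$. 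This uses the SDPT precisely because the success bound it provides is \emph{exponentially} small, which is what makes paying a $2^{-S}$ factor affordable. You should replace your broadcast-and-condition step with this guess-the-state argument; the rest of your proposal (the block/slice decomposition, the embedding via one size-$N/k$ sub-block per row with zeros elsewhere, and the final arithmetic) then goes through essentially as you wrote it.
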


\begin{theorem}\label{thcommspacemm}
Every bounded-error protocol with communication $C$ in which Alice and Bob have bounded space
$S$ and that computes the Boolean matrix product, satisfies
$CS=\Om{N^{3}}$.
\end{theorem}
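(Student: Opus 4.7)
The plan is to combine an epoch argument for communicating circuits with the strong direct product theorem of Theorem~\ref{th:vector-disj}: a bounded-error protocol $P$ for Boolean matrix product with communication $C$ and space $S$ is used to simulate a protocol for $\ell$ independent $\NDISJ_n$ instances whose resources violate Theorem~\ref{th:vector-disj} unless $CS=\Om{N^3}$. Since outputs of $P$ are made at specified gates in a fixed order, partition its execution into $r=N^2/\ell$ epochs, each producing exactly $\ell$ outputs (the parameter $\ell$ is to be chosen). The set $S_{e^*}=\{(r_t,c_t)\}_{t=1}^\ell$ of output positions produced in any specific epoch $e^*$ is input-independent, and averaging the total communication over the $r$ epochs (under the hard distribution $\mu$ from the SDPT) yields some $e^*$ whose expected communication is at most $C\ell/N^2$.

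For that $e^*$, embed $\ell$ independent $\NDISJ_n$ instances $(x^{(t)},y^{(t)})$ with $n=N/\ell$ into the matrix product as follows. Divide $[N]$ into $\ell$ blocks of size $n$; for each designated position $(r_t,c_t)\in S_{e^*}$, place $x^{(t)}$ in the $t$-th column block of row $r_t$ of $A$ and $y^{(t)}$ in the $t$-th row block of column $c_t$ of $B$, and set all remaining entries of $A$ and $B$ to zero. A direct block computation shows $(AB)_{r_t,c_t}=\NDISJ(x^{(t)},y^{(t)})$ while $(AB)_{r,c}=0$ for every position outside $S_{e^*}$. The block-structured sums decouple cleanly even when several designated positions share a row of $A$ or a column of $B$, because each block index $t$ is used by exactly one instance across both $A$ and $B$, so only the intended instance contributes to each designated output.

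Derive a protocol $P'$ for $\vectorres{\NDISJ_n}{\ell}$: Alice and Bob use the public coin to sample a uniformly random $S$-bit guess of the joint circuit state at the start of epoch $e^*$, then simulate only epoch $e^*$ of $P$ on the embedded inputs, aborting if the simulated communication exceeds $2C\ell/N^2$, and emit the $\ell$ produced values as the NDISJ answers. With probability $2^{-S}$ the guess matches the true start state on the embedded input, and a Markov step on the epoch communication handles the abort event with only constant loss in success. Hence $P'$ has deterministic communication $\Oh{C\ell/N^2}$ and success probability $\Om{2^{-S}}$ under $\mu$. Theorem~\ref{th:vector-disj} now applies: if $2C\ell/N^2\le\beta''\ell n=\beta''N$, i.e., $\ell\le\Theta(N^3/C)$, then the success must be at most $2^{-\alpha''\ell}$. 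Choosing $\ell$ at this threshold forces $2^{-S}=\Oh{2^{-\alpha''\ell}}$, whence $\ell=\Oh{S}$ and therefore $CS=\Om{N^3}$.

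The main technical obstacle is the state-guessing reduction: skipping the simulation of epochs preceding $e^*$ has to be done so that $P'$ is a genuine public-coin protocol, and the cost is precisely the factor $2^{-S}$ since the entire circuit state occupies $S$ bits. A secondary point is that Theorem~\ref{th:vector-disj} must be applied to the distributional success of $P'$ under $\mu$, which is legitimate because the SDPT is proved via such a hard distribution. The combinatorial ingredients (existence of a low-communication epoch and the block-decoupled embedding) are routine; the reason we get $CS=\Om{N^3}$ rather than the prior $CS^2=\Om{N^3}$ bound is precisely that Theorem~\ref{th:vector-disj} is tight linearly in $\ell n$, while earlier direct product results gave only $\sqrt{\ell n}$-type bounds.
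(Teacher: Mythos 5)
Your argument is essentially the paper's: slice the circuit, embed $\ell$ independent $\NDISJ$ instances into the restricted inputs so that a single slice computes them, lower-bound the slice's success probability by $\Omega(2^{-S})$ via state-guessing, and derive $\ell=\Oh{S}$ from Theorem~\ref{th:vector-disj}. The block-decoupled embedding you describe (one size-$N/\ell$ block of the universe per designated output, zero elsewhere) is exactly the one the paper uses. The only structural difference is the direction of the slicing: the paper cuts after every $\beta N$ wires of communication and shows each slice produces $\Oh{S}$ outputs, while you cut after every $\ell$ outputs and show the cheapest epoch has communication $\Oh{C\ell/N^2}$; these are the same counting argument with the quantifiers swapped.

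One caution about the middle of your reduction: in the communicating-circuit model the communication of an epoch is the number of wires crossing between Alice's and Bob's parts of the circuit in that segment, a fixed quantity independent of the input and of the circuit's randomness. So the averaging that selects $e^*$ is purely combinatorial and needs no hard distribution $\mu$, no Markov step, and no abort. This is not merely cosmetic: Theorem~\ref{th:vector-disj} is a worst-case statement, so deducing a contradiction from a success bound that you only claim ``under $\mu$'' is not actually licensed. Once you drop the Markov detour (the epoch communication is simply at most $C\ell/N^2$ always), the state-guessing step gives worst-case success $\Omega(2^{-S})$ on every restricted input, and Theorem~\ref{th:vector-disj} applies cleanly. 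With that fix the proof is correct and matches the paper's.
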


\begin{proof}[Proof of Theorem~\ref{thcommspacemv}.]
Alice receives a matrix $A$, and Bob a vector $b$ as
inputs. Given a circuit that multiplies these using communication
$C$ and space $S$ and that has success probability 1/2, we proceed to slice it. A slice of the circuit is a set of
consecutive gates in the circuit containing a limited amount of communication.
In communicating circuits the communication
corresponds to wires carrying bits that cross between Alice's
and Bob's part of the circuit. Hence we may cut the circuit after
$\beta N$ bits have been communicated and so on. Overall
there are $C/\beta N$ such circuit slices. Each starts with an
initial state computed by the previous part of the circuit. This
state on at most $S$ bits
may be replaced by the uniform distribution on $S$ bits. The effect is that the success probability
decreases to $(1/2)\cdot 1/2^S$, i.e., the outputs produced by the slice have at least this probability of being correct.

We want to employ the direct product theorem for
the communication complexity of $\NDISJ_{N/k}$ (for some $k$) to show that a protocol with the
given communication has success probability at most
exponentially small in the number of outputs it produces,
and so a slice can produce at most $\Oh{S}$
outputs. Combining these bounds with the fact that $N$
outputs have to be produced gives the tradeoff: $C/\beta N\cdot O(S)\geq N$.

To use the direct product theorem we restrict the inputs in the
following way: Suppose a protocol makes $k$ outputs. We partition
the vector $b$ into $k$ blocks of size $N/k$, and each block is
assigned to one of the $k$ rows of $A$ for which an output is
made. This row is made to contain zeroes outside of the positions
belonging to its block, and hence we arrive at a problem where
Nondisjointness has to be computed on $k$ instances of size $N/k$.
With communication $\beta N$, the success probability must
be exponentially small in $k$ due to Theorem~\ref{th:vector-disj}.
Hence $k=\Oh S$ is an upper bound on the number of outputs produced per slice.
\end{proof}

\begin{proof}[Proof of Theorem~\ref{thcommspacemm}.]
The proof uses the same slicing approach as in the other tradeoff
result. Note that we can assume that $S=\oo{N}$, since otherwise
the bound is trivial: the communication complexity without space restrictions is $\Omega(N^2)$.
Each slice contains communication $\beta
N$, and as before a direct product result showing that $k$
outputs can be computed only with success probability
exponentially small in $k$ leads to the conclusion that a slice
can only compute $\Oh S$ outputs. Therefore $(C/\beta N)\cdot \Oh S \ge
N^2$, and we are done.

Consider a protocol with $\beta N$ bits of communication making $k$ of the outputs. Each such output is the Boolean product of a row of $A$ and a column of $B$, and corresponds to a Nondisjointness problem. We
partition the universe $\{1,\ldots,N\}$ of the Nondisjointness
problems to be computed into $k$ mutually disjoint subsets
$U(i,j)$ of size $N/k$, each associated to an output $(i,j)$,
which in turn corresponds to a row/column pair $A[i]$, $B[j]$ in
the input matrices $A$ and $B$. Assume that there are $\ell$
outputs $(i,j_1),\ldots, (i,j_\ell)$ involving $A[i]$. Each output
is associated to a subset of the universe $U(i,j_t)$, and we set
$A[i]$ to zero on all positions that are not in one of these
subsets. Then we proceed analogously with the columns of $B$.

If the protocol computes on these restricted inputs, it
has to solve $k$ instances of Nondisjointness of size $n=N/k$ each,
since $A[i]$ and $B[j]$ contain a single block of size
$N/k$ in which both are not set to 0 if and only if $(i,j)$ is one
of the $k$ outputs. Hence Theorem~\ref{th:vector-disj} is applicable.
\end{proof}

\subsection{Multiparty}

\begin{theorem}
In the model where Charlie sends one message, followed by an arbitrary interaction between Alice and Bob, the 3-party Disjointness problem has randomized complexity $\Omega(\sqrt n)$.\end{theorem}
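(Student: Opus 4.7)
The plan is to reduce computing $\vectorres{\NDISJ_m}{k}$ with $k=m=\sqrt{n}$ to the restricted 3-party protocol for $\NDISJ_n$, and then to invoke our main direct product theorem (Theorem~\ref{th:vector-disj}). Partition $[n]$ into $k$ blocks of size $m$, and let $Z_i\in\{0,1\}^n$ be the characteristic vector of the $i$-th block. Given an instance $((a_1,b_1),\ldots,(a_k,b_k))$ of $\vectorres{\NDISJ_m}{k}$, form $X,Y\in\{0,1\}^n$ by concatenating the $a_i$'s and the $b_i$'s respectively. Since $\NDISJ_n(X,Y,Z_i)=\NDISJ_m(a_i,b_i)$, the $i$-th coordinate of the $k$-fold answer is produced by a single invocation of the 3-party protocol on $(X,Y,Z_i)$.

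Suppose a restricted 3-party protocol $\Pi$ for $\NDISJ_n$ has total communication $c$, with Charlie sending $c_0$ bits and Alice-Bob exchanging $c_1$ bits. The key observation is that Charlie's message depends only on $(X,Y)$ and the public coins, \emph{not} on $Z$, so one Charlie message covers all $k$ runs with different $Z_i$'s. In the 2-party reduction Alice holds $\bar b$ and Bob holds $\bar a$; neither can simulate Charlie, but using shared randomness they can \emph{guess} Charlie's message $\tilde m\in\{0,1\}^{c_0}$ uniformly at random and pretend Charlie sent $\tilde m$. They then execute the Alice-Bob part of $\Pi$ once for each $Z_i$, producing $k$ candidate outputs. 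This yields a 2-party protocol with communication $kc_1$ whose success on every input is at least $2^{-c_0}\cdot p_{\mathrm{joint}}$, where $p_{\mathrm{joint}}$ is the probability that $\Pi$ is simultaneously correct on all $k$ triples $(X,Y,Z_i)$ under its genuine coins.

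The delicate step is controlling $p_{\mathrm{joint}}$: per-instance correctness $2/3$ only yields $p_{\mathrm{joint}}\geq 1-k/3$ via union bound, which is vacuous for our choice of $k$. We first amplify $\Pi$ to per-instance error $O(1/k)$ by an $O(\log k)$-fold majority vote, multiplying both $c_0$ and $c_1$ by $O(\log k)$; the amplified $\Pi$ then has $p_{\mathrm{joint}}\geq 2/3$. Plugging into Theorem~\ref{th:vector-disj}, which bounds the success of any protocol for $\vectorres{\NDISJ_m}{k}$ with communication $\beta'' km$ by $2^{-\alpha'' k}$, forces $c_0=\Om{\sqrt n/\log n}$ or $c_1=\Om{\sqrt n/\log n}$, and hence $c=\Om{\sqrt n}$ up to the absorbed logarithmic factor.

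The main obstacle is precisely the joint-success estimate: a single run of $\Pi$ may fail on some $Z_i$ while succeeding on others, and the $k$ failure events are not independent since the inputs $(X,Y)$ and the coins are shared. Amplification is the cleanest workaround; a tighter argument can route the reduction through an AM-protocol in which Merlin sends Charlie's actual message (so the ambiguity is only $2^{c_0}$, and Merlin is also free to send any proof that happens to be consistent on all $k$ simulations), invoking the AM-extension of our main theorem stated as Theorem~11. Either way, the SDPT for $\NDISJ$ is the essential input, since the partition structure of the protocol cannot be exploited by the classical rectangle bound alone.
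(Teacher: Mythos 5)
Your reduction (partition $[n]$ into $\sqrt n$ blocks, restrict the forehead input $z$ to a single-block indicator, exploit that Charlie's message is independent of $z$ so it can be reused/guessed) is exactly the reduction the paper uses. But the way you handle the joint-success probability diverges from the paper's argument and, as written, falls short of the stated $\Omega(\sqrt n)$ bound. You amplify $\Pi$ by an $O(\log k)$-fold majority vote to push per-instance error to $O(1/k)$, then union-bound. That blows up both $c_0$ and $c_1$ by $\Theta(\log k)$, and after you plug into Theorem~\ref{th:vector-disj} you only get $c = \Om{\sqrt n / \log n}$ — you concede this yourself (``up to the absorbed logarithmic factor''), but that absorbed factor is a real gap between what you prove and what the theorem claims. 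The paper avoids amplification: it amplifies $\Pi$ only to a small \emph{constant} error $\epsilon$ (constant overhead), then applies Markov to conclude that with probability $\geq 1/2$ at most $2\epsilon\sqrt n$ of the $\sqrt n$ blocks fail, replaces Charlie's message by a random string (paying $2^{-\epsilon\sqrt n}$), and then invokes a \emph{threshold} direct product theorem from \cite{ben-aroya:hyper, unger:dpt} — a result stating that even computing a $(1-2\epsilon)$ \emph{fraction} of the $k$ instances correctly must have exponentially small success — rather than the vanilla SDPT you invoke. The threshold DPT is precisely the extra tool that removes the need to drive per-instance error below $1/k$, and hence removes the log.

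Your alternative AM-based route is sketched too loosely to rescue the bound. Two concerns: (i) soundness — the AM definition in Section~2 requires that with probability $\leq\epsilon$ over the public coins does \emph{any} Merlin message put the input in a rectangle mislabeling some 0-coordinate as 1; but $\Pi$ only guarantees per-instance error $\leq 1/3$ against the \emph{true} Charlie message, not against an adversarially chosen one, so the AM soundness requirement does not follow without more argument; (ii) parameters — Theorem~11 tolerates ambiguity $2^{\epsilon k}$, so you would need $c_0 \leq \epsilon k = \epsilon\sqrt n$, and the case analysis you need (``either $c_0$ is large or $c_1$ is large'') does not come out cleanly from the theorem as stated. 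So the threshold DPT invocation, not the AM extension, is the missing ingredient; with it, your amplification step becomes unnecessary and the log factor disappears.
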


\begin{proof}
This proof idea is due to \cite{ben-aroya:hyper}.
Let $P$ be a protocol for the 3-party $\NDISJ_n$ problem with $\epsilon \sqrt n$ communication and error $1/3$.

We partition $\{1,\ldots,n\}$ into $\sqrt n$ blocks of size $\sqrt n$. Charlie's input $z$ is restricted to contain 1's in one
particular block, and 0's elsewhere. So in effect $z$ chooses one of $\sqrt n$ instances of $\NDISJ_{\sqrt n}$ given to Alice and Bob. Since Charlie's message does not depend on $z$, Alice and Bob may reuse it in $\sqrt n$ runs of $P$ in order to determine, for all $\sqrt n$ possible $z$, the value of all of their $\NDISJ_{\sqrt n}$ problems with overall communication $\epsilon n$.
For each block the error probability is $\leq\epsilon$. The expected number of blocks where the protocol fails is at most $2\epsilon \sqrt n$ with probability at least 1/2 by the Markov inequality. So for every input $x,y$ to Alice and Bob there is a message from Charlie which will make them give the wrong answer for at most $2\epsilon \sqrt n$ blocks with probability at least 1/2.

We may now simply replace Charlie's message by a uniformly random string which will deteriorate the probability of having at least $(1-2\epsilon)\sqrt n$ blocks correct to $2^{-\epsilon \sqrt n}\cdot 1/2$. We have found a 2 player protocol with communication $\epsilon n$ and the mentioned success probability to compute $(1-2\epsilon)\sqrt n$ instances of Nondisjointness correctly. In \cite{ben-aroya:hyper} a general argument is given that relates the success probability in this situation to the standard situation of an SDPT (in which success means all the outputs are correct). For small enough $\epsilon$ this contradicts our main result. The idea of such ``threshold DPT"'s is further
investigated in \cite{unger:dpt}.
\end{proof}

\section{The Linear Programming Bound and Limited Ambiguity}

A major tool to prove randomized communication complexity bounds is the rectangle bound (see \cite{klauck:thresh, beame:sdpt}).
The method was originally introduced by Yao \cite{yao:prob}, and its most prominent, but by no means only use is in Razborov's Disjointness bound \cite{razborov:disj}. Informally the rectangle bound states that all rectangles in the communication matrix are either small or have large error (under some hard distribution).

In \cite{lovasz:cc} Lov\'{a}sz describes the following LP to bound randomized communication complexity.

\begin{eqnarray}
&&\min \sum_R w_R  \hspace{7cm}\mbox{ s.t.} \\
&&\mbox{for } x,y \mbox{ with } f(x,y)=1: \sum_{R:x,y\in R}w_R\geq 1-\epsilon\\
&&\mbox{for } x,y \mbox{ with }f(x,y)=0: \sum_{R:x,y\in R}w_R\leq\epsilon\\\
&&w_R\geq 0
\end{eqnarray}

He takes the dual.

\begin{eqnarray}
&&\max \sum_{x,y:f(x,y)=1} (1-\epsilon)\phi_{x,y}
\quad\quad+\sum_{x,y:f(x,y)=0}\epsilon\phi_{x,y} \hspace{1.1 cm}\mbox{ s.t.} \\
&&\mbox{for } x,y \mbox{ with } f(x,y)=1: \phi_{x,y}\geq 0\\
&&\mbox{for } x,y \mbox{ with } f(x,y)=0: \phi_{x,y}\leq 0\\
&&\mbox{for all }R:\sum_{x,y\in R} \phi_{x,y}\leq 1
\end{eqnarray}

Note that here $R$ ranges over all rectangles in the communication matrix. One can now prove a lower bound by
exhibiting a solution to the dual. Let $\phi(x,y)$ describe such a solution.

The optimum of this program characterizes by the (one-sided) rectangle bound, as shown in \cite{jain:partition}.

Instead of proceeding like Lov\'{a}sz, who relaxes (31) (and obtains the spectral discrepancy bound, which can be exponentially smaller), we note the absence of the ``trivial" constraint
\begin{equation}
\mbox{for all } x,y \mbox{ with }f(x,y)=1: \sum_{R:x,y\in R}w_R\leq1. \end{equation}

The primal Lov\'{a}sz LP augmented with (32) obviously also gives a lower bound on randomized communication. This method is the smooth rectangle bound described in \cite{jain:partition}.
Consider the dual of the augmented program.

\begin{eqnarray}
&&\max \sum_{x,y:f(x,y)=1} (1-\epsilon)\phi_{x,y} +\sum_{x,y:f(x,y)=0}\epsilon\phi_{x,y} +\sum_{x,y:f(x,y)=1} \psi_{x,y}\hspace{2cm}\mbox{ s.t.} \\
&&\mbox{for all } x,y \mbox{ with } f(x,y)=1: \phi_{x,y}\geq 0;\psi_{x,y}\leq 0\\
&&\mbox{for all } x,y \mbox{ with } f(x,y)=0: \phi_{x,y}\leq 0,;\psi_{x,y}= 0\\
&&\mbox{ for all }R:\sum_{x,y\in R} \phi_{x,y}+\psi_{x,y}\leq 1
\end{eqnarray}

Is the smooth rectangle bound really stronger than the standard one?
Let us find out the strongest type of communication protocol that we can still lower bound. Consider unambiguous AM-protocols, i.e., convex combinations of partitions with bounded error, see Section 2. It is easy to see that the LP with constraint (32) lower bounds such protocols.
Note that $N(\NDISJ_n)=O(\log n)$, and hence also $AM(\NDISJ_n)=O(\log n)$. However, we can show that unambiguous $AM$ protocols for $\NDISJ_N$ need linear communication. To prove that the smooth rectangle bound is large for $\NDISJ_n$
we can restrict the set of rectangles to ${\cal R}_v=\{R:\exists i: x,y\in R\Rightarrow i\in x\cap y\}$. This is possible by the same argument as Theorem 3, using binary search for a limited number of iterations followed by a trivial protocol for the resulting small instance of Nondisjointness, to get a protocol for the problem of finding an $i$ with $x_i=y_i=1$. Again we can assume that we do not accept without having seen a witness $i$. We define a solution to the dual as follows:

\begin{itemize}
\item Inputs in $T_{0,n}$ have weight $\phi_{x,y}=-\infty$.
\item Inputs in $T_{1,n}$ have weight $\phi_{x,y}=2^{\beta n}\mu_{1,n,n/4}(x,y)$.
\item  Inputs in $T_{2,n}$ have weight $\psi_{x,y}=-2^{\beta n}\mu_{2,n,n/4}(x,y)\cdot3/4$.
\item All other inputs have weight 0.
\end{itemize}

The cost of this solution is $2^{\beta n}((1-\epsilon)-3/4)$. The sign constraints are satisfied. Now consider a rectangle $R\in {\cal R}_v$. Let $R'$ denote the rectangle in which we ignore its intersection position $i$. Then
\begin{eqnarray}
\mu_{2,n,m}(R)&=&\mu_{1,n-1,m-1}(R')\cdot (n-1)/{n\choose 2}\\
&\geq&\mu_{0,n-1,m-1}(R')\cdot (n-1)/(1.5{n\choose 2})\\
&=&\mu_{1,n,m}(R)\cdot  (n-1)n/(1.5{n\choose 2})\\
&=&\mu_{1,n,m}(R)\cdot 4/3.\end{eqnarray}

Above we use that $R$ is large (otherwise (36) is trivial)  and hence $\mu_{0,n-1}(R')\geq n\cdot \mu_{1,n}(R)\geq n 2^{-\beta n}\geq 2^{-\delta (n-1)}$ and employ Fact 6 in (38).

Then \[\sum_{x,y\in R:f(x,y)=1} \phi_{x,y}+\sum_{x,y\in R:f(x,y)=0} \phi_{x,y}\leq2^{\beta n}\mu_{1,n,m}(x,y)(1-3/4\cdot 4/3)=0.\]

This shows that any unambiguous AM-protocol for $\NDISJ_n$ needs communication $\Omega(n)$. It is known \cite{karchmer:witness} that nondeterministic protocols with ambiguity $t$ need communication $\sqrt{D(f)}/t$ for the deterministic complexity $D$, and the rank lower bound is also known to hold for unambiguous nondeterministic protocols. However, these methods do not allow errors, and the first bound cannot achieve linear lower bounds at all (the approximate rank cannot give better bounds than $\sqrt n$ either since it lower bounds quantum protocols \cite{razborov:qdisj}).

We can also generalize our main result Theorem 3 in a similar fashion:

\begin{theorem}
Assume an AM-protocol with ambiguity $2^{\epsilon k}$ computes $k$ instances of $NDISJ_n$. Then the success probability of the protocol (over the random bits) is at most $2^{\Omega(-k)}$ unless the communication is at least $\beta kn$.\end{theorem}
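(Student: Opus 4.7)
The plan is to adapt the linear programming argument of Section 3 to accommodate AM-protocols with ambiguity $t=2^{\epsilon k}$, and then to rerun the reductions of Section 3.1 while tracking how the ambiguity evolves. The key observation is that an AM-protocol with ambiguity $t$ is a convex combination of nondeterministic partitions in which every input lies in at most $t$ accepting rectangles per summand. Defining $w_R$ as the probability mass on $R$ over accepting rectangles of the mixture, we still have $\sum_R w_R\leq 2^c$, so an AM-protocol of communication $c$ still furnishes an LP solution of cost at most $2^c$; the only change in the primal LP of Section 3.3 is that the partition constraint (4) becomes $\sum_{R:x,y\in R}w_R\leq t$ instead of $\leq 1$.

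Taking the dual of this modified primal leaves the rectangle feasibility condition $\sum_{x,y\in R}(\phi_{x,y}+\psi_{x,y})\leq 1$ unchanged but replaces the objective by $\sum_{x,y}\sigma\phi_{x,y}+t\,\psi_{x,y}$. Consequently the feasibility analysis of Section 3.5, which relied solely on the Intersection Sampling Lemma (Lemma \ref{lem:isl}), carries over verbatim for the same solution $(\phi,\psi)$. Evaluating the new objective at that solution gives $2^{\beta n}(\sigma-t\cdot 2^{-\alpha k})$, so setting $\sigma=2t\cdot 2^{-\alpha k}=2^{(\epsilon-\alpha)k+1}$ yields an LP value of at least $2^{\beta n-(\alpha-\epsilon)k}$. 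Choosing $\epsilon<\alpha=1/2$ therefore gives the analogue of Lemma \ref{lem:main}: every AM-protocol for $\Search_{{N\choose k}}$ with ambiguity $2^{\epsilon k}$ and communication $\beta N$ has success at most $2^{-(\alpha-\epsilon)k+1}=2^{-\Omega(k)}$.

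The bound is then lifted from $\Search_{{N\choose k}}$ to $\vectorres{\NDISJ_n}k$ through the reductions of Theorems \ref{th:ind-search} and \ref{th:vector-disj}. The random-permutation reduction of Theorem \ref{th:ind-search} is pure pre-processing of the inputs and does not affect the ambiguity at all. The binary-search reduction of Theorem \ref{th:vector-disj} composes the base subprotocol $s+1=O(\log(1/\beta''))$ times, and composing AM-protocols multiplies the ambiguity: an input with at most $t$ accepting proofs per call has at most $t^{s+1}$ accepting proof sequences in the composition. Since $s$ is a constant, the resulting AM-protocol for $\vectorres{\Search_n}k$ has ambiguity $2^{O(\epsilon k)}$, and by choosing the hypothesis $\epsilon$ of Theorem 11 smaller by the corresponding constant factor we obtain the claimed bound for $\vectorres{\NDISJ_n}k$.

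The main technical obstacle we expect is verifying this composition step: one needs to check carefully that running the base AM-protocol as a subroutine inside the binary search really produces an AM-protocol whose ambiguity is the product of the sub-protocol ambiguities. This requires a precise semantics for AM composition---the combined prover sends the concatenation of all sub-proofs, the combined verifier checks each sub-proof with fresh random coins, and one confirms that both the multiplicative bound on accepting proof sequences and the one-sided error structure survive. Apart from this bookkeeping, the entire argument amounts to the one-constraint modification of the LP already developed in the paper.
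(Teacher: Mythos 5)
Your proposal is correct and takes essentially the same route as the paper: replace the right-hand side of the partition constraint (4) by $t=2^{\epsilon k}$, observe that the dual objective becomes $\sum_{x,y}\sigma\phi_{x,y}+t\,\psi_{x,y}$ while the dual solution and rectangle feasibility condition are unchanged, and let the exponential gap between the $\phi$- and $\psi$-weights absorb the extra factor $t$ at the cost of weakening the success bound from $2^{-\alpha k+1}$ to $2^{-(\alpha-\epsilon)k+1}$. The paper dispatches the reduction step just as briefly, noting only that the reductions of Section~3.1 use a constant number of repetitions of the base protocol and therefore incur a polynomial (in $t$) increase in ambiguity, which is exactly your $t^{s+1}$ accounting.
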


The proof of this statement is practically identical to the proof of Theorem 3. Going through the reductions in section 3 we see
that they need only a constant repetition of the original protocol, and a polynomial increase in the ambiguity.

When we come to the search problem and the linear programming formulation note that we have to replace the right
hand side of constraint (4) by $\leq 2^{\epsilon k}$. In the dual this increases the gap between $\phi$'s and $\psi$'s in the objective function, but that gap is already exponentially large in $k$, so nothing in the construction really changes.

Note that this bound is quite tight, since with ambiguity $2^{O(k)}$ we can reduce the communication to any fraction of $kn$, and with ambiguity $n^k$ the communication collapses to $O(k\log n)$ even without any error.

\subsection*{Acknowledgments}
  I thank Rahul Jain and Shengyu Zhang for insightful discussions. The idea of a ``smooth" rectangle bound originated in discussion between us. Ronald de Wolf gave very helpful comments to an earlier version of this paper.

\clearpage
\bibliographystyle{alpha}

\newcommand{\etalchar}[1]{$^{#1}$}

\end{document}